\newtheorem{lemma}{Lemma}
\newtheorem{theorem}[lemma]{Theorem}
\newtheorem{corollary}[lemma]{Corollary}
\newtheorem{definition}[lemma]{Definition}
\newenvironment{LabeledProof}[1]{\noindent{\bf Proof of #1: }}{\qed}
\newcommand{\partn}{{\cal P}}
\newcommand{\hier}{{\cal H}}
\newcommand{\Diam}[1]{\mbox{\sc Diam}(#1)}
\newcommand{\Opt}[1]{\mbox{\sc Opt}(#1)}
\newcommand{\cost}[1]{\mbox{\sc Cost}(#1)}
\newcommand{\dist}[2]{d(#1,#2)}
\newcommand{\TreeDist}[3]{d_{#1}(#2,#3)}
\newcommand{\ust}{\mbox{UST}}
\newcommand{\subG}[2]{#1[#2]}
\newcommand{\superG}[2]{\widehat{#1}[#2]}
\newcommand{\subH}[2]{#1[#2]}
\newcommand{\subP}[2]{#1[#2]}
\newcommand{\pn}{{\cal P}}
\newcommand{\inE}[2]{\mbox{{\rm in}}_{#1}(#2)}
\newcommand{\outE}[2]{\mbox{{\rm out}}_{#1}(#2)}
\newcommand{\hH}{\widehat{D}}
\newcommand{\cntr}{\mbox{{\sc dest}}}
\newcommand{\newpartn}{{\cal Q}}
\newcommand{\detour}[2]{\mbox{{\sc dtr}}_{#1}(#2)}
\newcommand{\MaxDiam}{\mbox{{\sc MaxDiam}}}
\newcommand{\Real}{{\mathbb R}}
\newcommand{\BfPara}[1]{{\noindent {\bf #1.}}}
\newcommand{\polylog}{\mbox{polylog}}
\newcounter{listCounter}
\newcommand{\ListLengths}{\setlength{\itemsep}{0ex}\setlength{\topsep}{1ex}\setlength{\partopsep}{0ex}}
\newenvironment{NoIndentEnumerate}{\begin{list}{\arabic{listCounter}.}{
      \usecounter{listCounter}\setlength{\leftmargin}{1em}\ListLengths}}{\end{list}}
\newenvironment{mylowitemize}{\begin{list}{$\bullet$}{\setlength{\leftmargin}{1em}\ListLengths}}{\end{list}}
\newcommand{\junk}[1]{}
\begin{document}

\title{On Strong Graph Partitions and Universal Steiner Trees\footnote{A preliminary version of this work appeared 
in the Proceedings of the 53rd Annual IEEE Symposium on Foundations of Computer Science, 2012, pp. 81-90.}}

\author{Costas Busch \thanks{Department of Computer Science, Louisiana State University, Baton Rouge, USA. 
  E-mail: {\tt busch@csc.lsu.edu}.} 
  \and 
  Chinmoy Dutta \thanks{Twitter Inc., San Francisco, USA.
  E-mail: {\tt chinmoy@twitter.com}. 
  Chinmoy Dutta was at the College of Computer and Information Science, Northeastern Univrsity, and was supported in
  part by NSF grant CCF-0845003 and a Microsoft grant to Ravi Sundaram during this work.} 
  \and 
  Jaikumar Radhakrishnan \thanks{School of Technology and Computer Science, Tata Institute of Fundamental Research, Mumbai, India.  
  E-mail: {\tt jaikumar@tifr.res.in}.} 
  \and 
  Rajmohan Rajaraman \thanks{College of Computer and Information Science, Northeastern University, Boston, USA.  
  E-mail: {\tt rraj@ccs.neu.edu}.
  Rajmohan Rajaraman is supported in part by NSF grant CNS-0915985.} 
  \and
  Srivathsan Srinivasagopalan \thanks{CTS / VISA, Mountain View, USA. 
  E-mail: {\tt ssrini1@csc.lsu.edu}.
  Srivathsan Srinivasagopalan was at the Department of Computer Science, Louisiana State University, at the time of this work.}
}

\maketitle

\begin{abstract}
We study the problem of constructing universal Steiner trees for
undirected graphs. Given a graph $G$ and a root node $r$, we seek a
single spanning tree $T$ of minimum {\em stretch}, where the stretch
of $T$ is defined to be the maximum ratio, over all terminal sets $X$,
of the cost of the minimal sub-tree $T_X$ of $T$ that connects $X$ to
$r$ to the cost of an optimal Steiner tree connecting $X$ to $r$ in
$G$. Universal Steiner trees (USTs) are important for data aggregation
problems where computing the Steiner tree from scratch for every input
instance of terminals is costly, as for example in low energy sensor
network applications.

We provide a polynomial time \ust\ construction for general graphs
with $2^{O(\sqrt{\log n})}$-stretch. We also give a polynomial time
$\polylog(n)$-stretch construction for minor-free graphs. One basic
building block of our algorithms is a hierarchy of graph partitions,
each of which guarantees small strong diameter for each cluster and
bounded neighbourhood intersections for each node. We show close
connections between the problems of constructing USTs and building
such graph partitions. Our construction of partition hierarchies for
general graphs is based on an iterative cluster merging procedure,
while the one for minor-free graphs is based on a separator theorem
for such graphs and the solution to a cluster aggregation problem that
may be of independent interest even for general graphs. To our
knowledge, this is the first subpolynomial-stretch ($o(n^\epsilon)$
for any $\epsilon > 0$) UST construction for general graphs, and the
first polylogarithmic-stretch UST construction for minor-free graphs.
\end{abstract}

\thispagestyle{empty}
\newpage
\setcounter{page}{1}


\section{Introduction}
\label{sec:intro}
In this paper, we study universal approximations for the Steiner Tree
problem on undirected graphs. In the universal Steiner Tree (\ust)
problem for graphs, we are given an undirected graph $G$ and a
designated root vertex $r$ in $G$, and the task is to find a {\em
  single spanning tree} $T$ of $G$ such that for any set $X$ of
terminal vertices, the minimal subtree $T_X$ of $T$ that connects $X$
to $r$ is a good approximation to the optimal Steiner tree connecting
$X$ to $r$ in $G$.  The quality of the solution $T$ is given by its
{\em stretch}, which is the maximum ratio of the cost of $T_X$ to the
cost of the optimal Steiner tree connecting $X$ to $r$ in $G$ over all
terminal sets $X$.

The universal Steiner tree problem has been studied extensively for
the case of metrics where one is allowed to output an ``overlay
tree'', whose edges correspond to paths in the given
graph~\cite{jia+lnrs:ust,gupta+hr:oblivious,bhalgat+ck:differential,srinivasagopalan+bi:dd}.
Equivalently, the case of metrics can be viewed as a complete graph in
which all edge weights satisfy the triangle inequality. In fact, for
the case of metrics, there have been several important results on
extensions of the UST problem and variants seeking sparse network
structures that simultaneously approximate the optimal solutions for a
range of input
instances~\cite{goel+e:simultaneous,goel+p:constant,goel+p:oblivious,gupta+hr:oblivious}.

The focus of this paper is on the \ust\ problem on arbitrary graphs
where we require that the solution being sought is a spanning tree of
the given graph. The Minimum Steiner tree problem on a graph can be
well-approximated by solving the same problem on the metric induced by
the graph and then computing the minimum subtree connecting the
terminals. Such an approach, however, does not apply to the
\ust\ problem owing to the requirement that the tree {\em
  simultaneously} approximate the optimal Steiner tree for {\em all}
terminal sets. Note that this is a much stronger requirement than
asking for a probability distribution over spanning trees that has
small expected stretch for every terminal set. In the latter case,
there might not be any single tree in the distribution that is good
for all terminal sets, i.e., for every tree there is a terminal set
such that the minimal subtree connecting the terminals to the root has
a cost much larger than the optimal steiner tree.

\smallskip
\BfPara{Motivation} Our problem formulation is primarily motivated by
information aggregation and data dissemination in sensor and ad-hoc
wireless
networks~\cite{madden+fhh:tag,madden+sfc:aggrequery,krishnamachari+ew:datacentric}.
In a sensor network, data is often collected by a central agent that
periodically queries a subset of sensors for their sensed
information. In many applications, the queries seek aggregate
information which can be transmitted using a low cost tree that
aggregates data at intermediate nodes.  This reduces the number of
transmissions which is crucial as sensors have limited battery life
and wireless transmissions are power intensive.  It is not realistic,
however, to expect the sensors to compute and store a low cost tree
for each potential subset of sensors being aggregated as the sensors
have limited memory and computational power.  In this setting, a
universal tree provides a practical solution where the nodes just need
to realize a single tree which approximates optimal aggregation trees
for all subsets of sensors.  Thus, one natural approach is to employ a
universal {\em overlay}\/ tree.  This has several disadvantages,
however.  First, aggregation over the overlay tree requires a physical
routing infrastructure that supports point-to-point communication
among distant nodes in the network.  Second, even if such an
infrastructure exists, it may not route packets along minimum-cost
paths as required by the overlay tree.  Furthermore, aggregation over
the overlay tree requires synchronization among distant nodes in the
network and incurs overhead in terms of delays and storage.  Thus, in
some resource-constrained applications, we would ideally want to
construct a universal spanning tree as opposed to an overlay tree.

Another motivation to study universal approximation algorithms comes
from their relation with differential privacy which was recently
established by Bhalgat, Chakrabarty and Khanna
\cite{bhalgat+ck:differential}.  They showed that universal solutions
such as \ust s are differentially private, and argued that a kind of
``strong'' lower bounds for universal algorithms implies lower bounds
for differentially private ones as well.

From a theoretical standpoint, our motivation is to find out whether
the results known for \ust\ and related problems in the metric case
can, in fact, be achieved using spanning trees of the underlying
graphs. The analogous question for approximating metrics by tree
metrics has been answered affirmatively
by~\cite{elkin+est:stretch,abraham+n:petal,abraham+bn:stretch} who showed that
nearly logarithmic-stretch spanning trees exist for all graphs, almost
matching the best bound achievable by tree
metrics~\cite{fakcharoenphol+rt:treemetric}. No comparable results are
known for the \ust\ problem.

\subsection{Our results and techniques}
\label{sec:results}
Our main results are \ust\ algorithms for general graphs and for the
special class of minor-free graphs.  
\begin{mylowitemize}
\item
{\bf \ust\ for general graphs:} We present a polynomial-time algorithm
for computing a $2^{O(\sqrt{\log n})}$-stretch spanning tree for any
undirected graph.

\item
{\bf \ust\ for minor-free graphs:} We present a polynomial-time
algorithm for computing a $\polylog(n)$-stretch spanning tree for
any graph that is $H$-minor free for any finite graph $H$.
\end{mylowitemize}
While the specific techniques used in the two algorithms are
substantially different, both are grounded in a common general
framework that draws close connections between \ust s and certain
graph partitions based on strong diameter.  We define an $(\alpha,
\beta, \gamma)$-partition of a graph $G$ as a partition of the
vertices of $G$ into clusters such that each cluster has strong
diameter at most $\alpha \gamma$, and for every vertex, the ball of
radius $\gamma$ in $G$ intersects at most $\beta$ clusters.  A primary
motivation to study these partitions is the following result.
\begin{mylowitemize}
\item
{\bf From \ust s to partitions:} If every $n$-vertex graph has a
$\sigma(n)$-stretch \ust\ for some function $\sigma$, then for any
real $\gamma > 0$, every $n$-vertex graph has an $(O(\sigma(n)^2),
O(\sigma(n)), \gamma)$-partition.  Moreover, such a partition can be
efficiently constructed given black-box access to a
$\sigma(n)$-stretch \ust\ algorithm. (Section~\ref{sec:ust_to_partn})
\end{mylowitemize}
While the above result says that one cannot construct \ust s without
(implicitly) constructing these graph partitions, the significance of
our framework stems from our next result that one can also efficiently
construct \ust s from these strong partitions.  We define an $(\alpha,
\beta, \gamma)$-partition hierarchy as a sequence of partitions
starting from the trivial partition in which each vertex forms its own
cluster, and the $i$th partition is an $(\alpha, \beta,
\gamma^i)$-partition that coarsens the $(i-1)$th partition. (See
Section~\ref{sec:defn} for formal definitions.)  Given a partition
hierarchy, a natural divide-and-conquer method to construct a
\ust\ (similar to one employed in~\cite{jia+lnrs:ust} for metric \ust)
is to connect together subtrees recursively computed for lower levels
of the hierarchy.  This approach, however, does not work.  In fact, we
prove that {\em any}\/ \ust\ construction that {\em strictly
  obeys}\/ the connectivity structure of the hierarchy, in the sense
that the subgraph of the tree induced by every cluster of the
hierarchy is connected, will have poor stretch in the worst case (see
Section~\ref{sec:bottom-up}).  We overcome this obstacle by
introducing the novel notion of spanning trees that {\em approximately
  respect}\/ a given partition hierarchy; such a tree may be
disconnected within a cluster of the hierarchy, but is joined
externally so as to approximately respect the distances within every
cluster.  We show how to construct such spanning trees from a given
partition hierarchy and prove that they achieve desired stretch
factors.
\begin{mylowitemize}
\item
{\bf From partition hierarchies to \ust s:} For any graph $G$, given
an $(\alpha, \beta, \gamma)$-partition hierarchy for $G$, an
$O(\alpha^2 \beta^2 \gamma \log n)$-stretch \ust\ for $G$ can
be constructed in polynomial time. (Section~\ref{sec:partn_to_ust})
\end{mylowitemize}
A major consequence of the above result is that one can obtain a
$\polylog(n)$-stretch \ust\ by constructing a $(\polylog(n),
\polylog(n), \polylog(n))$-partition hierarchy.  Note that there is an
$\Omega(\log n)$ lower bound on the best stretch achievable, even in
the metric case~\cite{gupta+hr:oblivious,bhalgat+ck:differential}. We
next obtain our main results for general graphs and minor-free graphs
by constructing suitable partition hierarchies.
\begin{mylowitemize}
\item
{\bf Partition hierarchies for general graphs:} Every graph $G$ has a
polynomial-time computable\\ $(2^{O(\sqrt{\log n})}, 2^{O(\sqrt{\log
    n})}, 2^{O(\sqrt{\log n})})$-partition hierarchy.
(Section~\ref{sec:hierarchical})

\item
{\bf Partition hierarchies for minor-free graphs:} Every minor-free
graph $G$ has a polynomial-time computable $(O(\log^3 n), O(\log^4 n),
O(\log^3 n))$-partition hierarchy. (Section~\ref{section:minor-free})
\end{mylowitemize}
The partition hierarchy for general graphs is obtained by an iterative
procedure in which clusters are continually merged by identifying
vertices for which the number of intersecting clusters within a
specified distance exceeds the desired bound.  The particular order in
which the vertices are processed is carefully chosen; a natural greedy
approach fails.

Our construction of the partition hierarchy for minor-free graphs is
more complicated.  It is based on a separator theorem due
to~\cite{thorup:planar,abraham+g:separators} which builds
on~\cite{klein+pr:minors} and shows that any minor-free graph can be
decomposed into connected components, each of which contains at most
half the number of nodes, by removing a sequence of a constant number
of shortest paths.  A key ingredient of our hierarchical construction
for minor-free graphs is a result on cluster aggregation in general
graphs, which is of independent interest.

\begin{mylowitemize}
\item
{\bf Cluster aggregation:} We show that given any partition of $G$
into disjoint clusters each with strong diameter at most $D$, and a
set $S$ of portal vertices, we can aggregate the clusters into
disjoint connected components, each component with a distinguished
portal from $S$, such that for any vertex $v$, the distance, within
the component of $v$, from $v$ to the distinguished portal in the
component is at most $O(\log^2 n) D$ more than the distance of $v$ to
$S$ in $G$. (Section~\ref{sec:cluster})
\end{mylowitemize}

\subsection{Related work}
Research in network design over the past decade has revealed that it
is often possible to derive sparse network structures (e.g., routes,
multicast trees) that yield good approximations simultaneously for a
range of input instances.  One of the earliest examples of such a
result is due to Goel and Estrin \cite{goel+e:simultaneous} who
introduced the problem of {\em simultaneous single sink buy-at-bulk}
and gave an $O(\log D)$ bound on the simulataneous ratio where $D$ is
the total demand.  The guarantee is that their solution works
simultaneously for all fusion cost function $f$ which are concave and
monotonically non-decreasing with $f(0) = 0$. In a related paper
\cite{goel+p:oblivious}, Goel and Post constructed a distribution over
trees such that the expected cost of a tree for any $f$ is within an
$O(1)$-factor of the optimum cost for that $f$. A recent improvement
by Goel and Post \cite{goel+p:constant} provides the first constant
guarantee on the simultaneous ratio achievable by a tree.  This result
is incomparable to our results since the set of terminals that are
being aggregated in the buy-at-bulk problem are fixed.  

Jia et al. \cite{jia+lnrs:ust} introduced the notion of universal
approximation algorithms for optimization problems, focusing on TSP,
Steiner Tree and set cover problems. For the universal Steiner tree
problem, they presented polynomial-time algorithms that construct
overlay trees with a stretch of $O(\log^4 n/\log \log(n))$ for
arbitrary metrics and logarithmic stretch for doubling, Euclidean, or
growth-restricted metrics.  At a high-level, our approach of using
partition hierarchies to derive \ust s is similar to that
of~\cite{jia+lnrs:ust}.  There are several critical differences,
however.  First, as we discussed in Section~\ref{sec:results}, the
natural divide-and-conquer method employed in~\cite{jia+lnrs:ust} of
constructing the \ust\ fails for graphs.  Second, the construction of
{\em strong partitions}\/ for graphs (as opposed to the weak
partitions of~\cite{jia+lnrs:ust}) require entirely new techniques for
both general graphs and minor-free graphs, and introduce new
subproblems of independent interest, e.g., the cluster aggregation
problem studied in Section~\ref{sec:cluster}. Some of these results and techniques appeared in
the preliminary version of this paper~\cite{busch+drrs:ust}. The work
of~\cite{jia+lnrs:ust} also provided a lower bound of $\Omega(\log n/
\log \log n)$ for \ust\ that holds even when all the vertices are on a
plane; for general metrics, this can be improved to $\Omega(\log
n)$~\cite{gupta+hr:oblivious,bhalgat+ck:differential}.  Note that
these lower bounds extend to the \ust\ problem on graphs.  Lower
bounds for universal TSP are given
in~\cite{hajiaghayi+kl:tsp,gorodezky+kss:tsp}.  For earlier work on
universal TSP, see~\cite{platzman+b:tsp,bertsimas+g:euclideantsp}.

Gupta, Hajiaghayi and R\"{a}cke \cite{gupta+hr:oblivious} developed an
elegant framework to model {\em oblivious network design} problems and
gave algorithms with poly-logarithmic approximation ratios. They give
network structures that are simultaneously oblivious to both the link
cost functions (subject to them being drawn from a suitable class) and
traffic demand.  Their algorithms are based on the celebrated tree
metric embeddings of Fakcharoenphol et
al. \cite{fakcharoenphol+rt:treemetric} and hierarchical cut-based
decompostions of R\"acke~\cite{racke:congestion}.  For the
\ust\ problem on metrics, the algorithm of~\cite{gupta+hr:oblivious}
builds a \ust\ as follows: First obtain $O(\log n)$ trees from the
distribution of~\cite{fakcharoenphol+rt:treemetric}; next assign each
non-root vertex to a tree that well-approximates its distances to all
other nodes; finally, take the union, over each of the $O(\log n)$
overlay trees, the subtree of the tree induced by the root and the
vertices assigned to the tree.  The resulting overlay tree is an
$O(\log^2 n)$-stretch \ust.

A potential approach to solving the \ust\ problem on graphs is to
adapt the techniques of~\cite{gupta+hr:oblivious} with $O(\log n)$
spanning trees drawn from the distributions
of~\cite{elkin+est:stretch} instead of the overlay trees
of~\cite{fakcharoenphol+rt:treemetric}.  A major challenge here is
that the paths or subtrees chosen from the different $O(\log n)$ trees
may share several vertices and hence create unavoidable cycles when
combined.  The only prior work on constructing universal Steiner trees
for graphs is due to Busch et al.~\cite{srinivasagopalan+bi:dd} who
achieved a stretch of $O(\log^3 n)$ for the restricted class of graphs
with bounded doubling dimension by showing how one can continually
refine an $O(\log n)$-stretch overlay tree by removing cycles to
obtain an $O(\log^3 n)$-stretch \ust.  Their techniques, however, are
closely tied to the particular class of graphs and seem difficult to
generalize.  We also note that the spanning tree constructions aimed
at minimizing average
stretch~\cite{alon+kpw:stretch,elkin+est:stretch,abraham+bn:stretch,abraham+n:petal}
with respect to distance do not yield any useful bounds for our
stretch measure with respect to optimal Steiner trees.

\junk{oblivious'' (needs knowledge of just the class of the fusion cost
function but not the actual function) and ``traffic oblivious''
(source-demand pairs arrive online and are routed immediately without
any knowledge of the state of the network or other source-demand pairs
present). For the problem of minimizing the total cost of the network,
they gave oblivious algorithm with $O(\log^2 n)$ competitive ratio
whenever the fusion cost function is monotone and sub-additive. 

Their work
provided improved $O(\log^2 n)$ stretch bounds for both universal
Steiner tree and universal TSP on metrics.
}


\junk{The aforementioned problem of approximating a graph metric by a tree
metric has a rich history.  Alon et al. \cite{alon+kpw:stretch} showed
an upper bound of $2^{\sqrt{\log n \log \log n}}$ for approximating an
arbitrary graph metric by a distribution over spanning trees. Bartal
\cite{bartal:treemetrics} showed that an improved $O(\log^2 n)$
approximation is achievable using tree metrics if one drops the
requirement that the trees be subgraphs of the underlying graph.
Konjevod et al. \cite{konjevod+rs:planar} improved Bartal's result to
$O(\log n)$ for planar graphs while Charikar et
al. \cite{charikar+cggp:finite} improved it for low dimensional normed
spaces. Subsequently, Bartal \cite{bartal:improved} improved his
earlier result to $O(\log n \log \log n)$ and also showed a lower
bound of $\Omega(\log n)$ on the distortion for probabilistically
embedding an expander graph into a tree. Fakcharoenphol, Rao and
Talwar \cite{fakcharoenphol+rt:treemetric} closed the gap between the
lower and the upper bound by showing that arbitrary metrics can be
approximated by a distribution over tree metrics with distortion
$O(\log n)$.  More recently, Elkin, Emek, Spielman, and
Teng~\cite{elkin+est:stretch} showed an upper bound of $O(\log^2 n
\log \log n)$ for approximating an arbitrary graph metric using a
distribution of spanning trees, thus significantly improving the
result of Alon et al~\cite{alon+kpw:stretch}.  This result was
subsequently improved by Abraham, Bartal, and Neiman
\cite{abraham+bn:stretch} who achieved an $O(\log n \log \log n(\log
\log \log n)^3)$ bound.
}

As mentioned in Section~\ref{sec:results}, our universal Steiner trees
are based on certain partitions of graphs where we would like to bound
the strong diameter of the clusters while maintaining some sparsity
constraints.  Such partitions have been extensively
studied~\cite{peleg:distributebook,awerbuch+p:sparse}.  While nearly
optimal partitions based on weak diameter bounds are known in many
cases, strong-diameter based decompositions are less
understood~\cite{peleg:distributebook}.  There have been recent
results on strong-diameter
decompositions\cite{abraham+bn:stretch,abraham+gmw:decomposition,abraham+n:petal,elkin+est:stretch};
while our partitions share some of the ideas (e.g., of stitching
together judiciously chosen shortest paths), there are significant
differences in the details and the particular partitions being sought.
In particular, none of the proposed partitions satisfy the requirement
that the neighborhood around every node intersects a small number of
clusters.  Furthermore, while we seek partition hierarchies with
deterministic guarantees, many previous results concerned hierarchies
with either probabilistic or averaging guarantees or covers where
clusters are allowed to overlap.





\section{Definitions and notations}
\label{sec:defn}
Let $G = (V,E,w)$ denote a weighted undirected graph, where $V$ and
$E$ are the sets of vertices and edges, respectively, and $w:E
\rightarrow \Real$ is the length function on edges.  We assume,
without loss of generality, that the minimum edge length is $1$, since
otherwise we can scale all the edge lengths appropriately. The length
of a path is simply the sum of the lengths of the edges in it. For any
$u$ and $v$ in $V$, the distance between $u$ and $v$, denoted by
$\dist{u}{v}$, is the length of a shortest path between $u$ and $v$,
according to $w$. For $v \in V$ and real number $\rho$, let
$B(v,\rho)$ denote the ball of radius $\rho$ centered at $v$, i.e.,
$B(v, \rho)$ is the set of all vertices that are at distance at most
$\rho$ from $v$, including $v$.  The {\em diameter}\/ of the graph,
denoted by $\Diam{G}$, is the maximum distance between any two
vertices of $G$.

For any graph $G$ and any subset $X$ of vertices in $G$, let
$\subG{G}{X}$ denote the subgraph of $G$ induced by $X$.  For any
subset $X$ of vertices and $u$, $v$ in $X$, let
$\TreeDist{X}{u}{v}$ denote the distance between $u$ and $v$ in
$\subG{G}{X}$.

\smallskip
\BfPara{Universal Steiner trees} Given a specified {\em root}\/ vertex
$r \in V$ and a set of {\em terminal}\/ vertices $X \subseteq V$, a
Steiner tree $T$ for $X$ is a minimal subgraph of $G$ that connects
the vertices of $X$ to the root. The {\em cost}\/ of a tree $T$,
denoted by $\cost{T}$, is the sum of the lengths of edges in it.
Assume $G$ and $r$ to be fixed.  We let $\Opt{X}$ denote the cost of
the minimum cost steiner tree connecting $X$ to $r$.  Given a spanning
tree $T$ of $G$ and terminal set $X$, we define its projection on the
terminal set $X$, denoted by $T_X$, as the minimal subtree of $T$
rooted at $r$ that contains $X$.

\begin{definition}[Universal Steiner tree (\ust)]
\label{def:ust}
Let $G$ be an weighted undirected graph, and $r$ be a specified root
vertex in $V$.  We define the {\em stretch}\/ of a spanning tree $T$
of $G$ to be $\max_{X \subseteq V} \cost{T_X}/\Opt{X}$.  The
{\bf universal Steiner tree}\/ problem is to find a spanning tree with
minimum stretch.
\end{definition}

\junk{Note that the requirement that the root be included in the set of
terminals is indispensable, otherwise we cannot have a tree with
better than $\Omega(n)$ stretch which can be easily seen from the
example of a ring.}

\BfPara{Partitions} A {\em partition}\/ $\partn$ of $V$ is a
collection of disjoint subsets of $V$ whose union equals $V$.  We
refer to each element of $\partn$ as a {\em cluster}\/ of the graph
$G$. There are two notions for the diameter of a cluster $C$. This
paper focuses on the {\em strong}\/ diameter, denoted by $\Diam{C}$,
which is the diameter of the subgraph induced by the cluster,
i.e. $\Diam{\subG{G}{C}}$.  In contrast, the {\em weak}\/ diameter of
a cluster is simply the maximum distance between any two verices of
the cluster in $G$.

\begin{definition}[$(\alpha, \beta, \gamma)$-partition]
\label{def:partition}
An $(\alpha, \beta, \gamma)$-partition $\partn$ of $G$ is a partition
of $V$ satisfying:\junk{ the following properties.}
\begin{NoIndentEnumerate}
\item 
{\bf Strong diameter:} The strong diameter of every cluster $C$ in
$\partn$ is at most $\alpha \gamma$; i.e., $\Diam{C} \le \alpha
\gamma$.

\item
{\bf Cluster-valence:} For every vertex $v$ in $V$, $B(v,\gamma)$ has
a nonempty intersection with at most $\beta$ clusters in $\partn$.  We
refer to $\beta$ as the cluster-valence of $\partn$.
\end{NoIndentEnumerate}
\end{definition} 

A notion of partition similar to our $(\alpha,\beta,\gamma)$-partition
appeared in Jia et al. \cite{jia+lnrs:ust}, which required a bound on
the weak diameter of clusters.

\begin{definition}[Partition hierarchy]
\label{def:hierarchical_partition}
For $\gamma > 1$, an $(\alpha, \beta, \gamma)$-partition
hierarchy of a graph $G$ is a sequence $\hier = \langle \partn_0,
\partn_1, \ldots, \partn_d \rangle$ of partitions of $V$, where $d=
\lceil \log_\gamma(\Diam{G}/\alpha) \rceil$, satisfying:\junk{ the
following properties.}
\begin{NoIndentEnumerate}
\item
{\bf Partition:} For $0 \le i \le d$, $\partn_i$ is an $(\alpha,
\beta, \gamma^i)$-partition of $G$.  Furthermore, $\partn_d$ is the
collection $\{V\}$.  For convenience, we set $\partn_{-1}$ to the
collection $\{\{v\} \mid v \in V\}$.

\item
{\bf Hierarchy:} For $0 \le i < d$, every cluster in $\partn_i$ is
contained in some cluster in $\partn_{i+1}$.

\item
{\bf Root Padding:} For $0 \le i \le d$, the ball $B(r,\gamma^i)$ of
radius $\gamma^i$ around root $r$ is contained in some cluster in
$\partn_i$.
\end{NoIndentEnumerate}
\end{definition}

\noindent
For a partition $\partn$ of a graph $G$, let $\superG{G}{\partn}$
denote a weighted graph in which the vertex set is $\partn$, and there
is an edge $(C,C')$ between clusters $C$ and $C'$ if $G$ has an edge
between a vertex in $C$ and a vertex in $C'$; the length of the edge
$(C,C')$ is the minimum length of an edge between $C$ and $C'$
in $G$.

For a partition $\partn$, let $\partn(v)$ denote the cluster of
$\partn$ that contains the vertex $v$ and $\MaxDiam(\partn)$ denote
$\max_{C \in \partn} \Diam{C}$.  For a subset $X$ of vertices, let
$\subP{\partn}{X}$ denote the partition restricted to $X$; i.e.,
$\subP{\partn}{X}$ is the collection $\{X \cap C \mid C \in \partn\}$.
For a partition hierarchy $\hier$ and a cluster $C$ that is an element
of a partition $\partn_i$ in $\hier$, we let $\subH{\hier}{C}$ denote
the partition hierarchy projected to $C$; that is, $ \subH{\hier}{C} =
\langle \subP{\partn_0}{C}, \ldots, \subP{\partn_i}{C} \rangle$.  Let
$T$ be a spanning tree and $\hier$ be an $(\alpha, \beta,
\gamma)$-partition hierarchy of $G$. We say that $T$ {\em strictly
  obeys} $\hier$ if for each $\partn_i \in \hier$ and each cluster $C
\in \partn_i$, the subgraph of $T$ induced by $C$ is connected. We say
that $T$ {\em $\mu$-respects}\/ $\hier$ if for each $\partn_i \in
\hier$, each $C \in \partn_i$ and every pair of vertices $u, v
\in C$, $\TreeDist{T}{u}{v}$ is at most $\mu \alpha \gamma^i$.

\section{Strong partitions and universal Steiner trees}
\label{sec:ust_partn}
We now present close connections between the strong partitions of
Definition~\ref{def:partition} and universal Steiner trees. We first
show in Section~\ref{sec:ust_to_partn} that partitions with low strong
diameter and low cluster-valence are necessary for deriving
low-stretch trees. We next show in Section~\ref{sec:partn_to_ust} how
partition hierarchies yield \ust s.  Given an $(\alpha, \beta,
\gamma)$-partition hierarchy for any graph $G$,
Section~\ref{sec:bottom-up} shows how to get an
$O((\alpha\beta)^{\log_\gamma n} \gamma \beta^2 \log_\gamma
n)$-stretch \ust\ for $G$ that strictly obeys the partition hierarchy,
and also presents a nearly matching lower bound on the stretch for
such \ust s. Section~\ref{sec:top-down} then presents an improved
$O(\alpha^2 \beta^2 \gamma \log_\gamma n)$-stretch \ust\ construction
that does not strictly obey but approximately respects the partition
hierarchy.  All proofs in this section are deferred to
Section~\ref{proofs:ust_partn}.

\subsection{From universal Steiner trees to strong partitions}
\label{sec:ust_to_partn}

\begin{theorem}
\label{thm:ust_2_partn}
If every $n$-vertex graph has a
$\sigma(n)$-stretch \ust\ for some function $\sigma$, then for any
real $\gamma > 0$, every $n$-vertex graph has an $(O(\sigma(n)^2),
O(\sigma(n)), \gamma)$-partition.  Moreover, such a partition can be
efficiently constructed given black-box access to a
$\sigma(n)$-stretch \ust\ algorithm.
\junk{Given an algorithm $\mathcal{A}$ to construct a $\sigma(n)$-stretch \ust\
for all graphs in polynomial time, we can obtain a polynomial-time
algorithm $\mathcal{A'}$ to construct an
$(O(\sigma^2),O(\sigma),\gamma)$-partition for all graphs and all
$\gamma > 0$ which uses $\mathcal{A}$ as a black box.}
\end{theorem}

\subsection{From partition hierarchies to a universal Steiner trees}
\label{sec:partn_to_ust}

\junk{Assume graph $G$ and root vertex $r$ to be fixed throughout this
  subsection. } We first prove the following important lemma,
showing the significance of $\mu$-respecting trees.  \junk{The
  main result here is an algorithm to construct an $O(\alpha^2 \beta^2
  \gamma \log n)$-stretch \ust\ from an
  $(\alpha,\beta,\gamma)$-partition hierarchy which is presented in
  Section~\ref{sec:top-down}. We say that a spanning tree $T$ of $G$
  {\em $\mu$-respects}\/ an $(\alpha, \beta, \gamma)$-partition
  hierarchy $\langle \partn_i \rangle$ if for any $i$, any cluster $C$
  of $\partn_i$, and any vertices $u, v \in C$, $\TreeDist{T}{u}{v}$
  is at most $\mu \alpha \gamma^i$.}

\begin{lemma}
\label{lem:stretch}
A spanning tree $T$ that $\mu$-respects an $(\alpha, \beta,
\gamma)$-partition hierarchy has a stretch of $O(\mu \alpha \beta
\gamma \log n)$.
\end{lemma}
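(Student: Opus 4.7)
Fix a terminal set $X$ (assume $r \in X$ without loss of generality) and let $T^*$ be an optimum Steiner tree connecting $X$ in $G$, of total cost $L = \Opt{X}$. Since $T_X$ is the minimum subtree of $T$ containing $X$, it suffices to exhibit any subtree $T' \subseteq T$ with $X \subseteq V(T')$ and $\cost{T'} = O(\mu \alpha \beta \gamma L \log_\gamma n)$. My plan is to construct $T'$ by stitching together tree paths of $T$ between carefully chosen cluster representatives, one per level of the hierarchy.

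The first step is a cluster-counting bound. For each $i \ge 0$, let $n_i$ denote the number of clusters of $\partn_i$ that meet $V(T^*)$. Selecting a maximal $\gamma^i$-separated set of vertices of $T^*$ and running the standard Euler-tour argument (consecutive chosen vertices must be at $T^*$-distance $\ge \gamma^i$ along a tour of length $2L$) covers $V(T^*)$ by $k \le 2L/\gamma^i + 1$ balls of radius $\gamma^i$; cluster-valence allows at most $\beta$ clusters per ball, so $n_i \le \beta k = O(\beta L / \gamma^i + \beta)$. The root-padding property then caps the effective depth: since $V(T^*) \subseteq B(r, L)$, as soon as $\gamma^{i^*} \ge L$ the entire $V(T^*)$ lies in a single cluster containing $r$, giving $i^* = O(\log_\gamma n)$ (capped by the hierarchy depth $d$).

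Next, pick a representative $x_C \in C$ for every active cluster $C$ at levels $0$ through $i^*$, insisting $x_C = r$ whenever $r \in C$. Build a virtual hierarchy tree $\hat T$ by adding, for each $i \ge 1$, an edge from each active level-$(i-1)$ representative $x_C$ to the representative $x_{C'}$ of its level-$i$ parent cluster, and by adding an edge from each terminal $t \in X$ to the representative of its level-$0$ cluster. Both endpoints of every such edge lie in a common cluster at level $i$ (or level $0$), so by the $\mu$-respecting hypothesis their $T$-distance is at most $\mu \alpha \gamma^i$ (or $\mu \alpha$, respectively). Take $T'$ to be the union over $\hat T$-edges of the corresponding $T$-paths; $T'$ is connected, is a subgraph of $T$, and contains $X$, so $\cost{T_X} \le \cost{T'}$.

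Summing contributions level by level yields
\[
\cost{T'} \le (|X|+1) \mu \alpha + \sum_{i=1}^{i^*} n_{i-1} \cdot \mu \alpha \gamma^i,
\]
which evaluates to $O(\mu \alpha \beta \gamma L \log_\gamma n)$ once one substitutes $n_{i-1} = O(\beta L / \gamma^{i-1} + \beta)$ and $i^* = O(\log_\gamma L)$; after rescaling so that the minimum edge weight is at least $1$ (which is scale-invariant and makes $|X| \le L$ because the Steiner tree has $\ge |X|$ edges), the leading $(|X|+1)\mu\alpha$ term is absorbed into the bound. The main obstacle is the cluster-counting inequality $n_i = O(\beta L / \gamma^i + \beta)$: this is precisely where the cluster-valence condition must be leveraged to tie the spread of clusters at level $i$ to the total cost of $T^*$ via a net on $V(T^*)$; once this is in hand, the remaining work is a routine geometric summation over the $O(\log_\gamma n)$ relevant levels.
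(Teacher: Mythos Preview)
Your proposal is correct and follows essentially the same approach as the paper: both build the candidate subtree of $T$ by linking cluster representatives up the hierarchy (paying $\mu\alpha\gamma^{i}$ per link), and both pit this against $\Opt{X}$ via the cluster-valence bound. The only cosmetic difference is that you phrase the key inequality as an upper bound $n_i = O(\beta L/\gamma^i)$ (via a $\gamma^i$-net on $V(T^*)$ and an Euler-tour count) whereas the paper phrases the same inequality as a lower bound $\Opt{X} = \Omega(n_i\gamma^i/\beta)$; both leave the sharpening from $\log_\gamma \Diam{G}$ to $\log_\gamma n$ as a deferred detail.
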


\subsubsection{A basic bottom-up algorithm}
\label{sec:bottom-up}

We first present a bottom-up algorithm for constructing a spanning
tree $T$ from a partition hierarchy that strictly obeys it. Though the
stretch achieved by the spanning tree is much weaker than what we
obtain by a different algorithm, it helps develop our improved
algorithm.

\begin{algorithm}[ht!]
\caption{\ust: A basic bottom-up algorithm}
\label{alg:partn_to_ust_1}
\begin{algorithmic}[1]
  \REQUIRE Undirected graph $G$, $(\alpha, \beta, \gamma)$-partition
  hierarchy $\langle \partn_i: -1 \le i \le d = \lceil
  \log_\gamma(\frac{\Diam{G}}{\alpha}) \rceil \rangle$ for $G$.

  \ENSURE A spanning tree $T$ of $G$.

  \STATE For every cluster $C$ in $\partn_{-1}$, set $T(C)$ to $\emptyset$.

  \FOR{level $i$ from $0$ to $d$}

  \FOR{cluster $C$ in $\partn_i$}

  \STATE For an edge $e = (C_1, C_2)$ in
  $\superG{\subG{G}{C}}{\subP{\partn_{i-1}}{C}}$, let $m(e)$ denote
  the edge between $C_1$ to $C_2$ in $\subG{G}{C}$ that has minimum
  weight. (Recall that $\subG{G}{C}$ is the subgraph of $G$ induced
  by $C$ and $\subP{\partn_{i-1}}{C}$ is the partition $\partn_{i-1}$
  restricted to the set $C$.)
  
  \STATE Compute a shortest path tree $T'$ from an arbitrary source
  vertex in $\superG{\subG{G}{C}}{\subP{\partn_{i-1}}{C}}$.

  \STATE Set $T(C)$ to be the union of $\cup_{C' \in \subP{\partn_{i-1}}{C}} T(C')$ and
  $\{m(e) : e \in T'\}$.
  
  \ENDFOR

  \ENDFOR

  \STATE Set $T$ to be $T(V)$. (Note that $V$ is the lone cluster in
  $\partn_d$.)
\end{algorithmic}
\end{algorithm}

\begin{theorem}
\label{thm:universal}
For any graph $G$, given an $(\alpha, \beta, \gamma)$-partition
hierarchy, an $O((\alpha\beta)^{\log_\gamma n} \gamma \beta^2 \log
n)$-stretch \ust\ is computed by Algorithm~\ref{alg:partn_to_ust_1} in
polynomial time.
\end{theorem}

We complement the above construction by an almost matching lower bound
for stretch achievable by any spanning tree that strictly obeys a
partition hierarchy.

\begin{theorem}
\label{thm:lower_bound}
Let $\alpha$, $\beta < \gamma$. There exists a graph $G$ and an
$(\alpha,\beta,\gamma)$-partition hierarchy $\hier$ of $G$ such that
any spanning tree $T$ of $G$ that strictly obeys $\hier$ has sretch
$\Omega((\alpha\beta)^{\frac{\log_\gamma n}{4}} \gamma)$. (Depicted in
Figure~\ref{fig:lower_bound}.)
\end{theorem}

\subsubsection{Split and join: An improved top-down algorithm}
\label{sec:top-down}

The tree returned by Algorithm~\ref{alg:partn_to_ust_1} strictly obeys
the given partition hierarchy.  In doing so, however, it pays a huge
cost in the distances within the cluster which is unavoidable.

We now present a much more careful construction of a universal Steiner
tree which does not enforce the connectivity constraint within
clusters; that is, we use a given partition hierarchy $\hier$ to build
a tree $T$ in which $\subG{T}{C}$ may be disconnected.  By allowing
this disconnectivity within clusters, however, we show that we can
build a tree that $\mu$-respects the given hierarchy for a much
smaller $\mu$, assuming $\gamma$ is sufficiently large.  The
pseudocode is given in Algorithm~\ref{alg:ust}.  We have presented the
algorithm in a more general context where the goal is to compute a
forest rooted at a given set of portals.  To obtain a \ust, we invoke
the algorithm with the portal set being the singleton set consisting
of the root.

\junk{A
  critical step in our algorithm is to connect a given set of clusters
  using shortest paths, a problem that may be of independent
  interest.}

\junk{We first present some useful definitions.  For a directed path $v_0
\rightarrow v_1 \rightarrow \cdots \rightarrow v_k$, we call $v_0$ and
$v_1$ as the {\em tail}\/ and {\em head}\/, respectively, of the path.
We say that a directed path $p$ {\em traverses}\/ a set $C$ of
vertices if there exists a vertex $v$ in $C$ that lies in $p$ and is
not its head.}

\newcommand{\highway}[1]{\mbox{\sc highway}(#1)}
\newcommand{\portals}[1]{S_{#1}}
\newcommand{\favorite}{\mbox{\sc fav}}
\newcommand{\maxchild}{\mbox{\sc maxc}}
\newcommand{\rank}{\mbox{\sc rank}}

\begin{algorithm}[ht!]
\caption{\ust: The split and join algorithm}
\label{alg:ust}
\begin{algorithmic}[1]
  \REQUIRE Undirected graph $G = (V,E)$, a nonempty set $\portals{G} \subseteq
  V$ of portals, a partition hierarchy $\hier = \{\partn_0, \partn_1,
  \ldots, \partn_\ell\}$.

  \ENSURE A forest $F$ that connects every vertex in $V$ to
  $\portals{G}$.

  \STATE If the graph consists of a single vertex, then simply return
  the vertex as the forest.

  \STATE \junk{Let $\widehat{G}$ denote a graph with vertex set
  $\partn_\ell$ and an edge between cluster $C_1$ and $C_2$ in
  $\partn_\ell$ whenever there is an edge between $C_1$ and $C_2$ in
  $G$. } For an edge $e = (C_1, C_2)$ in $\superG{G}{\partn_\ell}$, let
  $m(e)$ denote the minimum-weight edge from $C_1$ to $C_2$ in $G$.

  \STATE Let $\widehat{S}$ denote the set of clusters that have a
  nonempty intersection with $\portals{G}$.  

  \STATE For every cluster $C$ in $\widehat{S}$, set $\portals{C}$ to be $C \cap S$.

  \STATE Compute a shortest path forest $\widehat{F}$ in
  $\superG{G}{\partn_\ell}$ rooted at $\widehat{S}$.

  \FOR{cluster $C$ in $\partn_\ell$ in order of decreasing distance
    from $\widehat{S}$ in $\widehat{F}$}

  \IF{ $C$ is a leaf node in $\widehat{F}$}

  \STATE Set $\rank(C)$ to be $0$, $\portals{C}$ to be $\{\mbox{tail
    of } m(e)\}$ where $e$ is the edge connecting $C$ to its parent in
    $\widehat{F}$.

  \ELSE 

  \STATE Let $\maxchild$ be $\max\{\rank(C') \mid C' \mbox{ is child
    of } C\}$.  Set $\favorite(C)$ to be a child of $C$ with rank
  $\maxchild$.  Set $\highway{C}$ to be a shortest path in $C$ from
  the head of $m(e)$ to the tail of $m(e')$ where $e$ and $e'$ are the
  edges connecting $\favorite(C)$ to $C$ and $C$ to its parent,
  respectively, in $\widehat{F}$.  Set $\portals{C}$ to be the set of
  nodes in $\highway{C}$.

  \IF{there exist at least two children of $C$ in $\widehat{F}$ whose rank equals $\maxchild$}

  \STATE Set $\rank(C)$ to be $\maxchild + 1$

  \ELSE 
  
  \STATE Set $\rank(C)$ to be $\maxchild$

  \ENDIF

  \ENDIF

  \ENDFOR

  \FOR{each cluster $C$ in $\partn_\ell$}

  \STATE \label{line:ust_call} Compute $F(C) = \ust(\subG{G}{C}, \portals{C}, \subH{\hier}{C})$

  \ENDFOR

  \STATE Return $F$ to be the union of $\bigcup_{C \in \partn_\ell}
  \highway{C}$, $\bigcup_{C \in \partn_\ell} F(C)$, and $\{m(e): e \in \widehat{F}\}$.
\end{algorithmic}
\end{algorithm}

\junk{In the following, we let $S_C$ denote the set of portals in the call
to $\ust$ on cluster $C$ in line~\ref{line:ust_call} of the algorithm.
We extend the definition of $S_C$ to the case where $C$ is the entire
graph in which case $S_C$ is simply the input $S$ to the algorithm.
}

\begin{lemma}
\label{lem:dist_in_ust}
The forest $F$ returned by the algorithm $7\alpha \beta$-respects $\hier$.
\end{lemma}

\begin{theorem}
\label{thm:partn_to_ust}
Given an undirected graph $G$, portal set $\portals{G} = \{r\}$, where
$r$ is an arbitrary vertex of $G$, and $(\alpha, \beta,
\gamma)$-partition $\hier$ of $G$ as input, Algorithm~\ref{alg:ust}
returns an $O(\alpha^2 \beta^2 \gamma \log n)$-stretch \ust.
\end{theorem}


\section{Partition hierarchy for general graphs}
\label{sec:hierarchical}
In this section we present our algorithm for obtaining a partition
hierarchy for general graphs. Our main result is the following.

\begin{theorem}
\label{thm:hierarchical_partition}
Fix integer $k \ge 1$ and $\epsilon > 0$. For any graph $G$, a
hierarchical $((\frac{4}{3} + \epsilon)4^{k-1} - \frac{4}{3},
kn^{\frac{1}{k}}, \gamma)$-partition can be constructed in polynomial
time for $\gamma \geq \frac{1}{\epsilon}((\frac{4}{3} +
\epsilon)4^{k-1} -\frac{4}{3})$. In paricular, setting $k = \lceil
\sqrt{\log n} \rceil$ and $\epsilon = 1$, a hierarchical
$(2^{O(\sqrt{\log n})}, 2^{O(\sqrt{\log n})}, 2^{O(\sqrt{\log
    n})})$-partition for any graph can be constructed in polynomial
time.
\end{theorem}

\vspace{-0.5cm}
\paragraph{Algorithm.} For notational convenience, as mentioned in
section~\ref{sec:defn}, we start buiiding the hierarchy at level $-1$
by defining $\mathcal{P}_{-1}$ as the trivial partition where every
vertex is in its own cluster. For $i = 0, 1, \ldots, d = \lceil
\log_{\gamma} \frac{\Diam{G}}{\alpha} \rceil$, we build the $i$th
level of the hierarchy, $\partn_i$, after building the previous
levels. Assuming that $\partn_{i-1}$ has been constructed, we
construct $\partn_i$, as follows.

\noindent {\bf Construction of level $i$:} Clusters of $\partn_i$ are formed in
successive stages starting from stage $0$. We assign a {\em rank} to
each cluster based on the stage in which it is created: a cluster
formed in stage $j$ gets the rank $j$. (All the clusters of level $-1$
are assigned the rank $0$.) We will denote the set of clusters of rank
$j$ in $\partn_i$ by $S^i_j$. At all times while building $\partn_i$,
we maintain a partitioning of the graph, i.e., we guarantee that each
vertex of the graph is contained in exactly one cluster of
$\partn_i$. The partitioning, however, may change as clusters of a
higher ranks are formed by merging clusters of lower ranks.

\noindent {\bf Stage $0$:} In stage $0$, we simply add all the clusters of
$\partn_{i-1}$ to $S^i_0$. 

\noindent {\bf Stage $j > 0$:} For $j > 0$, stage $j$ works in two
phases, one after another. 

\begin{mylowitemize}
\item {\bf First phase:} In the first phase, we repeatedly look for a
  vertex {\em contained in a cluster of rank at most $j-1$} such that
  the ball of radius $\gamma^i$ around it, $B(v,\gamma^i)$, intersects
  more than $n^{\frac{1}{k}}$ clusters of rank  precisely $j-1$. If we
  find such a vertex $v$, we merge the cluster containing $v$ with all
  the clusters of rank $j-1$ that $B(v,\gamma^i)$ intersects. This
  newly created cluster is assigned the rank $j$ and added to $S^i_j$
  while all the clusters that were merged to form it are deleted from
  their respective $S^i_{j'}$'s. The first phase ends when we can no
  longer find any such vertex $v$.

\item {\bf Second phase:} In the second phase, we repeat a simlar procedure
  for vertices {\em contained in clusters of rank $j$}. As long as we
  can find a vertex $v$ in a cluster of rank $j$ such that
  $B(v,\gamma^i)$ intersects more than $n^\frac{1}{k}$ clusters of
  rank $j-1$, we merge the cluster containing $v$ with all the
  clusters of rank $j-1$ that $B(v,\gamma^i)$ intersects to form a new
  cluster of rank $j$. We include this new cluster in $S^i_j$ and
  delete all the clusters that were merged to form it from their
  respective $S^i_{j'}$'s. The second phase, and also the stage $j$,
  ends when we cannot find any such vertex $v$, and the next stage
  begins.
\end{mylowitemize}

If no new cluster gets formed in the first phase of a stage, the
construction of level $i$ of the hierarchy finishes and
$\partn_i$ is defined as simply the union of all the non empty
$S^i_j$'s.
 
\smallskip
\BfPara{Remark} Although the two phases of a stage are quite similar
and one might be tempted to do away with this particular ordering of
mergings, the naive approach without the ordering does not
work. Having a careful order in which mergings are carried out enables
us to control the growth of the strong diameter of the clusters. To
see this, consider a cluster formed in the second phase of some stage
$j$. It contains a unique cluster that was formed in the first phase
of stage $j$. Call it the core. Our ordering ensures that only the
vertices in the core can lead to mergings in the second phase of stage
$j$. This is because for any vertex $v$ outside the core,
$B(v,\gamma^i)$ intersects at most $n^{\frac{1}{k}}$ clusters of rank
$j-1$, otherwise the first phase would not have ended. Thus the
mergings of the second phase cannot increase the diameter by too much
as the new vertices are always ``close'' to the core.

Section~\ref{proofs:hierarchical} contains the full pseudocode of the
algorithm, given in Algorithm~\ref{alg:hierarchical_partition}, and
the proof of Theorem~\ref{thm:hierarchical_partition}.  Using
Theorem~\ref{thm:hierarchical_partition} and
Theorem~\ref{thm:partn_to_ust}, we get our \ust construction for
general graphs.

\begin{corollary}
\label{cor:ust_general}
A $2^{O(\sqrt{\log n})}$-stretch universal Steiner tree can be computed in
polynomial time for any undirected graph.
\end{corollary}

\section{The Cluster aggregation problem}
\label{sec:cluster}
In this section, we define the Cluster Aggregation problem which
arises when building partition hierarchies for minor-free graphs (see
Section~\ref{section:minor-free}).  Our problem formulation and
algorithm, however, apply to arbitrary graphs and may be of
independent interest.  Indeed, our cluster aggregation algorithm is
useful for building other strong-diameter based hierarchical
partitions with applications to distributed
computing~\cite{birk+klsw:veracity}.

\begin{definition}[Cluster Aggregation]
\label{def:ccc}
Given a graph $G = (V,E)$, partition $\partn$ of $G$, set $S \subseteq
V$ of {\em portals}, a {\em cluster aggregation}\/ is a function
$\cntr: \partn \rightarrow S$.  The function $\cntr$ naturally induces
a new partition $\newpartn = \{ \bigcup_{C: \cntr(C) = s} C \mid s \in
S\}$ that coarsens $\partn$.  For each vertex $v$ in $V$, we define
the {\em detour}\/ $\detour{\cntr}{v}$ for $v$ under $\cntr$ to be the
difference between the distance from $v$ to $S$ in $G$ and the
distance from $v$ to $\cntr(\partn(v))$ in subgraph of $G$ induced by
the cluster in $\newpartn$ that contains $v$; i.e., $\detour{\cntr}{v}
= (\TreeDist{\subG{G}{\newpartn{v}}}{v}{\cntr(C)} - \dist{v}{S})$.  We
  define the detour of $\cntr$ to be $\max_{v \in V}
  \detour{\newpartn}{v}$.  The goal of the cluster merging problem is
  to find a cluster aggregation with minimum detour.
\end{definition}

Our algorithm for the Cluster Aggregation problem proceeds in $O(\log
n)$ phases.  Each phase has a number of iterations.  Each iteration
aggregates a subset of the clusters in $\partn$ and assigns the same
$\cntr$ value for each of them.  The selection of clusters in a
particular iteration is based on how shortest paths from these
clusters to $S$ proceed through the graph.  The interaction of these
shortest paths is captured by means of auxiliary directed graph.  For
any directed graph $K$ and set $A$ of vertices in $K$, let
$\inE{K}{A}$ (resp., $\outE{K}{A}$) denote the set of vertices that
have an edge into (resp., from) any vertex in $A$.  The pseudocode for
our algorithm appears in Algorithm~\ref{alg:merger} in
Section~\ref{proofs:cluster}.  We now show that
Algorithm~\ref{alg:merger} solves the Cluster Aggregation problem for
a given partition $\partn$ with a detour of
$O(\log^2(|\partn|)\MaxDiam(\partn))$.  We first establish the
following simple lemma that bounds the number of phases.  The proof of
the lemma and Theorem~\ref{thm:max-detour} are given in
Section~\ref{proofs:cluster}.

\begin{lemma}
\label{lem:phase}
If $V_i$ and $V_{i+1}$ are the set of vertices in $D_i$ and $D_{i+1}$
at the start of phase $i$ and $i+1$, respectively, then $|V_{i+1}| \le
|V_i|/2$.
\end{lemma}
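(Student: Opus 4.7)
The plan is a simple book-keeping argument over the passes of the outer \texttt{REPEAT} inside phase~$i$ (the loop that terminates when $\hH$ is empty). Fix a single such pass and let $T$, $I = \inE{\hH}{T}$ be the sets in hand when the inner \texttt{REPEAT} exits; by the exit condition we have $|I| < |T|$. During this pass the algorithm (i) assigns $\cntr$ to every vertex of $T$, (ii) adds every vertex of $I$ to $V_{i+1}$, and (iii) deletes $T \cup I$ from $\hH$. So at most $|I \setminus T|$ new vertices are placed in $V_{i+1}$ while $|T \cup I|$ vertices are removed from $\hH$.

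Next I would argue that the sets $T \cup I$ produced by the successive passes of phase~$i$ form a partition of $V_i$. Disjointness follows from step~(iii): removing $T \cup I$ from $\hH$ prevents those vertices from appearing in any subsequent $T$ or $\inE{\hH}{\cdot}$ in the same phase. The union equals $V_i$ because the outer loop only halts when $\hH$ is empty. Enumerating the passes as $\ell = 1, \dots, m$, with corresponding $T_\ell, I_\ell$, we then have
\[
|V_{i+1}| \;\le\; \sum_{\ell=1}^{m} |I_\ell \setminus T_\ell|
\qquad\text{and}\qquad
\sum_{\ell=1}^{m} |T_\ell \cup I_\ell| \;=\; |V_i|.
\]

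Finally, the per-pass guarantee $|I_\ell| < |T_\ell|$ gives $|I_\ell \setminus T_\ell| \le |I_\ell| < |T_\ell|$ and hence
\[
|T_\ell \cup I_\ell| \;=\; |T_\ell| + |I_\ell \setminus T_\ell| \;>\; 2\,|I_\ell \setminus T_\ell|.
\]
Summing over $\ell$ yields $|V_{i+1}| < \tfrac{1}{2}|V_i|$, the claimed bound.

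The only subtle point is the partition claim for the passes: it relies on $\inE{\hH}{T}$ being taken in the \emph{current} $\hH$, so that once $T \cup I$ is removed at the end of a pass those vertices are no longer candidates for inclusion in later $T$'s or $\inE{\hH}{\cdot}$'s within the same phase. With that observed, the lemma reduces to arithmetic on the single-pass inequality $|I| < |T|$ which the inner loop's termination condition hands us for free.
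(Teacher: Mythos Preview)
Your argument is correct and is essentially the paper's own proof, just spelled out in more detail: both rely on the facts that the sets $T_\ell \cup I_\ell$ from successive passes partition $V_i$, that each pass contributes only $I_\ell$ to $V_{i+1}$, and that the inner loop's exit condition gives $|I_\ell| < |T_\ell|$, from which $|V_{i+1}| < |V_i|/2$ follows by summation. Note that your use of $I\setminus T$ is harmless but redundant, since under the intended reading $\inE{\hH}{T}$ denotes the \emph{external} in-neighbours of $T$ and is therefore disjoint from $T$ (the paper's proof uses this when asserting $T \subseteq V_i - V_{i+1}$).
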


\begin{theorem}
\label{thm:max-detour}
The detour for any vertex $v$ in $G$ in the cluster merger returned by
Algorithm~\ref{alg:merger} is at most $\log^2 (|\partn|)
\MaxDiam(\partn)$.
\end{theorem}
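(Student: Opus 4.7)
The plan is to reduce the detour bound to a bound on the number of clusters traversed by each aggregation path, and then to achieve the latter through a recursive region-growing construction on the cluster graph. Let $D = \MaxDiam(\partn)$ and let $\widehat{G} = \superG{G}{\partn}$ be the cluster graph whose vertices are the clusters of $\partn$ and whose edges carry minimum inter-cluster edge weights of $G$. Two elementary facts drive the analysis: (i) any path in $\widehat{G}$ of weight $\ell$ visiting $k$ clusters lifts to a walk in $G$ of weight at most $\ell + k D$, since each cluster has strong diameter at most $D$; and (ii) any shortest $v$-to-$S$ path in $G$ projects to a path in $\widehat{G}$ from $\partn(v)$ to the image of $S$ of weight at most $\dist{v}{S}$. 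Hence, if the algorithm assigns each cluster $C$ to a portal $\cntr(C)$ via a path in $\widehat{G}$ that stays inside the monochromatic block $\cntr^{-1}(\cntr(C))$, has $\widehat{G}$-weight at most $\dist{C}{S}$, and visits at most $k$ clusters, then $\detour{\cntr}{v} \le kD$ for every $v$. The entire proof thus reduces to bounding $k$ by $O(\log^2|\partn|)$.

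To achieve this, I would recursively halve the active cluster set via region growing. A recursive call receives a connected sub-collection $\mathcal{C}$ of clusters together with a nonempty portal set $P$; if $|P|=1$, everything in $\mathcal{C}$ is assigned to the unique portal. Otherwise, pick any $s \in P$ and perform Leighton--Rao-style ball growing inside $\widehat{G}[\mathcal{C}]$ around the cluster containing $s$. By the standard averaging/volume argument over radii, there is a radius $r^*$ at which the ball contains at most $|\mathcal{C}|/2$ clusters yet every cluster in the ball is reachable from $s$ \emph{without leaving the ball}, with additive $\widehat{G}$-slack at most $O(\log|\partn|)\cdot D$ above its shortest distance to $s$. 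I would recurse on the ball with portals $P$ restricted to the ball (which contains $s$, hence is nonempty), and separately on each connected component of the complement with the remaining portals. The invariant carried across the recursion is that after $j$ levels, each cluster is joined to its assigned portal through a path of at most $j\cdot O(\log|\partn|)$ clusters whose $\widehat{G}$-weight exceeds $\dist{C}{S}$ by at most $j\cdot O(\log|\partn|)\cdot D$.

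The main obstacle is twofold. First, producing a halving radius with low boundary slack simultaneously; this is a standard telescoping argument on the integral of boundary weight versus ball volume, familiar from sparsest-cut region growing. Second, and more delicate, is controlling how restricting the portal set to the ball (or to the complement) changes shortest distances: after one recursive step on the complement, a cluster's closest portal in $P\setminus\mathrm{ball}$, measured inside the complement, can exceed $\dist{C}{S}$. To handle this I would relax the invariant to allow shortest-path distances measured against the \emph{current} active portal set using only the current subgraph, and I would choose the ball-growing radius so that detours caused by the removed ball are absorbed into the per-level $O(\log|\partn|)\cdot D$ slack. Halving gives recursion depth at most $\log_2|\partn|$, and at the leaves the invariant produces $k = O(\log^2|\partn|)$; fact (i) then yields $\detour{\cntr}{v} \le \log^2(|\partn|)\cdot D$ as claimed.
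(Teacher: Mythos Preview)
Your proposal has a fundamental mismatch with the theorem you are asked to prove. The statement asserts a bound on the detour of the aggregation \emph{returned by Algorithm~\ref{alg:merger}}; it is not an existence claim. What you have sketched is an entirely different construction based on Leighton--Rao region growing, so even if every step of your argument were airtight it would not establish the stated theorem. The paper's proof is an analysis of Algorithm~\ref{alg:merger}: it uses the auxiliary \emph{directed} graph $D$ whose edges record, for each cluster $X$, the clusters that the fixed shortest path $p_X$ traverses (with labels giving positions along $P_X$). Within a phase, a territory $T$ is grown by alternately absorbing in-neighbours and their out-neighbours; the inner loop continues only while $|\inE{\hH}{T}|\ge|T|$, so $T$ at least doubles and the loop runs at most $\log m$ times. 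Clusters on the final boundary $\inE{\hH}{T}$ are deferred to the next phase, and Lemma~\ref{lem:phase} gives $|V_{i+1}|\le|V_i|/2$, hence at most $\log m$ phases. The inductive invariant is that after iteration $j$ of phase $i$, any cluster $Z$ assigned to portal $x$ can reach $x$ inside $C(x)$ by following a prefix of $p_Z$ (which lies in the current $T\subseteq C(x)$ since out-neighbours were absorbed), then paying one $\MaxDiam(\partn)$ to hop onto a previously established path; this yields the additive $2((i-1)\log m+j)\,\MaxDiam(\partn)$ and hence $O(\log^2 m)\,\MaxDiam(\partn)$ overall.

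Setting aside the target mismatch, your alternative construction also has genuine gaps. First, the halving claim is not delivered by standard region growing: growing a ball from a single portal $s$ and stopping either when the ball reaches size $|\mathcal{C}|/2$ or when the boundary is small are two different stopping rules, and you need \emph{both} sides of the cut to shrink to get $\log m$ recursion depth. Second, and more seriously, the complement of the ball may contain \emph{no} portals (take $|P|\ge 2$ but all portals clustered near $s$); your recursion then has no valid subcall, and your base case ``$|P|=1$'' does not help since you have already removed the ball. Third, even when the complement has portals, the increase in $\dist{C}{S}$ when restricting to $P\setminus B$ inside $\mathcal{C}\setminus B$ can be arbitrary and is not an $O(\log m)\cdot D$ quantity absorbable by the region-growing slack; you acknowledge this obstacle but do not resolve it. The paper sidesteps all of these issues by never discarding portals and by exploiting the directed shortest-path structure so that a deferred cluster always retains its original $p_Z$ as a prefix of a valid monochromatic route.
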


\newcommand{\cA}{{\cal A}}
\newcommand{\cB}{{\cal B}}
\newcommand{\cF}{{\cal F}}
\newcommand{\cI}{{\cal I}}
\newcommand{\cK}{{\cal K}}
\newcommand{\cN}{{\cal N}}
\newcommand{\cR}{{\cal R}}
\newcommand{\cZ}{{\cal Z}}
\newcommand{\wI}{{\widehat {\cal I}}}
\newcommand{\mydist}{d}

\section{Partition hierarchy for minor-free graphs}
\label{section:minor-free}
A weighted graph $G$ is $H$-minor free if zero or more edge
contractions on $G$ does not give a graph isomorphic $H$.  Minor-free
graphs are special cases of $k$-path separable graphs.  A graph $G$ is
{\em $k$-path separable} \cite{abraham+g:separators} if there exists a
subgraph $S$, called the {\em $k$-path separator}, such that: (i) $S =
S_{1} \cup S_{2} \cup \cdots \cup S_{l}$, where for each $1 \leq i
\leq l$, subgraph $S_i$ is the union of $k_i$ paths where each path is
shortest in $G \setminus \bigcup_{1 \leq j < i} S_j$ with respect to
its end points; (ii) $\sum_{i} k_i \leq k$; (iii) either $G \setminus
S$ is empty, or each connected component of $G \setminus S$ is
$k$-path separable and has at most $n/2$ nodes.

Thorup~\cite{thorup:planar} shows that any planar graph $G$ is
$3$-path separable, where all paths in the separator $S$ belong in
$S_1$, that is, they are shortest paths in $G$.  Abraham and
Gavoille~\cite{abraham+g:separators} generalize the result to any
$H$-minor free graph, where they show for fixed size $H$ the graph is $k$-path separable, for
some $k = k(H)$, and the $k$-path separator can be computed in
polynomial time.  Interesting classes of $H$-minor free graphs are:
planar graphs, which exclude $K_5$ and $K_{3,3}$; outerplanar graphs,
which exclude $K_4$ and $K_{2,3}$; series-parallel graphs, which
exclude $K_4$; and trees, which exclude $K_3$.  

\subsection{The algorithm}

 Consider now an arbitrary weighted $H$-minor free graph $G$, for
 fixed size $H$.  (You may also take $G$ to be an arbitrary $k$-path
 separable graph.)  \junk{We will construct a hierarchical $(\alpha,
   \beta, \gamma)$-partition of $G$ which is based on forming clusters
   around the path separators of $G$.  The concept of creating
   clusters around path separators has been introduced by Busch {\em
     et al.}  \cite{busch+lt:covers} in the context of sparse covers
   in $k$-path separable graphs.  Here, we extend that technique to
   hierarchical partitions.}  We build the partition hierarchy bottom
 up by coarsening clusters.  Suppose we are given a
 $(\alpha,\beta,\gamma^{i-1})$-partition $\partn_{i-1}$, where $i > 0$. 
We describe
 how to build a $(\alpha,\beta,\gamma^{i})$-partition $\partn_{i}$,
 such that $\partn_{i-1}$ is a refinement of $\partn_{i}$.  (Assume
 level $0$ partition has each node as a cluster.)

\smallskip
\BfPara{High-level recursive structure} The first clusters of
partition $\partn_{i}$ are formed around a $k$-path separator of $G$
by appropriately merging clusters of $\partn_{i-1}$ close to the
separator paths.  We then remove the $k$-path separator.  This may
result in the formation of one or more disjoint connected components,
each of which is still a $H$-minor free graph.  We repeat the
clustering process recursively on each residual connected component,
until no nodes remain.

\smallskip
\BfPara{Clustering a connected component} Algorithm
\ref{alg:minor-separator}, whose pseudocode is in
Section~\ref{proofs:minor-free}, implements the recursive
decomposition of $G$.  The algorithm actually receives as input an
arbitrary connected component $\Phi$ of $G$, which it then decomposes
into possibly one or more connected components that are processed
recursively.  The initial invocation is with $\Phi = G$.  The
resulting clusters of the $\partn_{i}$ partition of $G$ will appear in
a set $\cN$, which is initially empty.  New clusters formed around
path separators are inserted and maintained into $\cN$.  Some of the
newly formed clusters may merge with existing clusters in $\cN$
created form previously processed paths in earlier steps of the
algorithm.  The partition $\partn_{i}$ is the final $\cN$ that we
obtain after we recursively process all the path separators in each
component in $G$.  Let $S = S_{1} \cup S_{2} \cup \cdots \cup S_{l}$
be the path separator of $\Phi$.  We process the paths of $S$ in
sequence starting from the paths in $S_{1}$, then the paths in
$S_{2}$, and so on.

\smallskip
\BfPara{Processing a path} This is the main subroutine of the
algorithm, separated out as Algorithm~\ref{alg:path-clustering} in
Section~\ref{proofs:minor-free}.  Consider a path $p \in S_\chi$,
where $S_\chi$ is path set of $S$ in $\Phi$.  Let $\Psi$ be the
connected component of $\Phi \setminus \bigcup_{1 \leq j < \chi} S_j$
in which $p$ resides.  Algorithm \ref{alg:path-clustering} merges
clusters of $\partn_{i-1}$ which are within distance $2 \gamma^i$ from
$p$ using Algorithm \ref{alg:merger}.  As we show in the analysis, the
choice of this particular distance is to control the diameter of the
new clusters and the amount of intersections in any ball of diameter
$\gamma^i$.  The algorithm merges only integral clusters of
$\partn_{i-1}$ which are completely within $\Psi$, and which we denote
$\partn^\Psi_{i-1}$.  In Section~\ref{proofs:minor-free}, we show that
non-integral clusters have already been included in $\cN$ from
previously processed separator paths.  Let $\cA \subseteq
\partn^\Psi_{i-1}$ be the clusters within distance $2\gamma^i$ from
$p$ and which are candidates for merging.  We do not include in $\cA$
any cluster of $\partn^\Psi_{i-1}$ which has already been used in
$\cN$.  Of particular interest are the clusters $\cB \subseteq \cA$
which are neighbours with $\cN$ or next to non-integral clusters of
$\partn_{i-1}$, and these will be handled as special cases.

\junk{
Denote $\partn^\Psi_{i-1} \subseteq \partn_{i-1}$ the {\em
  integral} clusters of $\partn_{i-1}$ which are completely contained
in $\Psi$.  We define the following subsets $\cA$ and $\cB$ of
$\partn^\Psi_{i-1}$ such that: $\cA$ contains all clusters of
$\partn^\Psi_{i-1}$ not yet included in $\cN$ within distance $2
\gamma^i$ from $p$ in $\Psi$; $\cB$ contains all the clusters of $\cA$
which are adjacent to clusters in $\cN$ (where $\cN$ are the clusters
which have been formed so far form paths processed before $p$).

Let $\Psi'$ be the sub-graph induced by $\cA$ (note that $\Psi'$ may
not be connected).  
We define two sets of nodes $L$ and $U$ in $\Psi'$ which will serve as
portals around which new merged clusters will be formed.  Set $L$
contains the {\em leaders} of path $p$, which is a maximal set of
nodes in $p \cap \Psi'$, such for any pair $u,v \in L$,
$\mydist_p(u,v) \geq \gamma^i$, and $u$ and $v$ cannot belong to the
same cluster of $\cA$.  Set $U$ contains one arbitrary node from each
cluster in $\cB$ (for each cluster in $\cB$ that does not already have a leader).  
The clusters in $\cA$ are combined by invoking 
Algorithm \ref{alg:merger}
 portal nodes in $L \cup U$.

Let $\cR$ contain all resulting clusters from invoking Algorithm
\ref{alg:merger}.  We can write $\cR = \cI_p \cup \cK_p$ where $\cI_p$
consists of clusters that contain a node of $L$, and $\cK_p$ consists
of clusters that contain a node of $U$.  Each cluster $X \in \cK_p
\setminus \cI_p$ merges further with at most one arbitrary adjacent
cluster $Y \in \cN$, for which there is an edge $(u,v) \in E(\Psi)$
such that $u \in X$, $v \in Y$, and $v \notin \Psi'$.  We insert the
merged cluster from $X$ and $Y$ back to $\cN$.  The returned set of
clusters from processing path $p$ is $\cN = \cN \cup \cI_p$.  This
completes the processing of path $p$.

The full algorithm, formally described in
Algorithm~\ref{alg:minor-separator}, is initially invoked with $\Phi =
G$ and $\cN = \emptyset$.  The resulting partition $\partn_{i}$ is the
final $\cN$ that we obtain after we recursively process all the path
separators in $G$.
}

Let $\Psi'$ be the sub-graph induced by $\cA$ 
(note that $\Psi'$ may not be connected).
The clusters in $\cA$ are merged 
by invoking 
Algorithm \ref{alg:merger}.
We define two sets of nodes $L$ and $U$ in $\Psi'$ 
which serve as destination portals for the merged clusters.
Set $L$ contains the {\em leaders} of path $p$, 
which is a maximal set of nodes in $p \cap \Psi'$
such for any pair $u,v \in L$,
$\mydist_p(u,v) \geq \gamma^i$,
and $u$ and $v$ cannot belong to the same cluster of $\cA$.
Set $U$ contains one arbitrary node from each cluster in $\cB$
(from each cluster in $\cB$ that does not already contain a node in $L$).

Let $\cR$ contain all resulting merged (coarsen) clusters 
from invoking Algorithm \ref{alg:merger}.
We can write $\cR = \cI_p \cup \cK_p$ where
$\cI_p$
consists of clusters that contain a portal node of $L$,
and $\cK_p$
consists of clusters that contain a portal node of $U$.
Each cluster $X \in \cK_p$
may further merge with at most one arbitrary adjacent cluster $Y \in \cN$,
for which there is an edge $(u,v) \in E(\Psi)$
such that $u \in X$, $v \in Y$, and $v \notin \Psi'$.  
We insert the merged cluster from $X$ and $Y$ back to $\cN$. 
The returned set of clusters from processing path $p$
is $\cN = \cN \cup \cI_p \cup \cK'_p$,
where $\cK'_p$ contains the remaining clusters of $\cK_p$
(we actually show that $\cK'_p = \emptyset$).
 
\subsection{The analysis}
Consider a minor-free graph $G$ with $n$ nodes.
The recursive process of removing path separators
defines a decomposition tree $T$ of $G$.
Each node $t \in T$ corresponds to a connected component of $G$, 
which we will denote $G(t)$.
The root $\pi$ of $T$ corresponds to $G$, namely, $G(\pi) = G$. 
Denote $S(t)$ the path separator 
for the respective graph $G(t)$.
If $G(t) \setminus S(t) = \emptyset$, then $t$ is a leaf of $T$.
Otherwise, for each connected component $\Phi \in G(t) \setminus S(t)$
there is a node $w \in T$
such that $w$ is a child of $t$ and $G(w) = \Phi$.

According to the algorithm, after a new cluster is created (when a
path is processed) it may get larger when new clusters merge into it
(when subsequent paths are processed). 
Consider a path $p \in S(t)$, for some $t \in T$.  We say that a {\em
  cluster belongs} to $p$ if it contains a leader of $p$.  It can be
shown that a cluster in $\partn_i$ belongs to exactly one path.  A key
point of the analysis is that clusters of a path $p$ are far from
clusters in sibling nodes of $T$ (at least $2\gamma^i$ apart).  Thus,
when we bound intersections in balls of radius $\gamma^i$, we only
need to consider clusters on the same branch from the root to a leaf
of $T$.  Hence, the amount of intersections can be bounded using the
depth of $T$ which is $O(\log n)$, and the number of paths $k$ in a
separator.  Similarly, clusters can only grow along such a branch,
which also helps to control the diameter.  The proofs of the following
statements are deferred to Section~\ref{proofs:minor-free}.

\begin{lemma}
\label{thm:minor-partn_i}
$\partn_{i}$ is a $(\alpha', c_2 \alpha' k \log n , \gamma^i)$-partition,
where $\alpha' =  c_1 k \log^3 n$, for constants $c_1$ and $c_2$.
\end{lemma}

\begin{theorem}
\label{thm:minor-universal}
We can obtain a hierarchical 
$(O(\log^3 n), O(\log^4 n), O(\log^3 n))$-partition 
for any minor-free graph $G$ in polynomial time. 
\end{theorem}

From Theorems \ref{thm:partn_to_ust} and \ref{thm:minor-universal} we obtain the following corollary.

\begin{corollary}
\label{cor:ust_minor-free}
A $\polylog(n)$-stretch universal Steiner tree can be computed in
polynomial time for any minor-free graph with $n$ nodes.
\end{corollary}

\section{Proofs for Section~\ref{sec:ust_partn}}
\label{proofs:ust_partn}

\subsection{Proofs for Section~\ref{sec:ust_to_partn}}
\label{proofs:ust_to_partn}

\begin{LabeledProof}{Theorem~\ref{thm:ust_2_partn}}
Let $\gamma > 0$ be given. Assume we have algorithm $\mathcal{A}$ for
constructing $\sigma(n)$-stretch \ust\ in polynomial time. We
construct the algorithm $\mathcal{A}'$ that constructs an
$(O(\sigma(n)^2),O(\sigma(n)),\gamma)$-partition as follows. For a
given $n$, let $\sigma$ denote $\sigma(n)$. Given an $n$-node weighted
undirected graph $G=(V,E,w)$, $\mathcal{A}'$ constructs graph
$G'=(V',E',w')$ where $V' = V \cup \{r\}$, $E' = E \cup \{(r,v): v \in
V\}$ and $w'$ extends $w$ to $E'$ by simply assigning $w((r,v)) = 2
\sigma \gamma$ for all $v \in V$. Here $r$ is an additional vertex
not in $V$. $\mathcal{A}'$ invokes $\mathcal{A}$ with graph $G'$ and
root vertex $r$ as inputs. Let $T$ be the tree rooted at $r$ output by
$\mathcal{A}$ and $T_1, \ldots, T_k$ be the subtrees of $T$ connected
directly to the root $r$ by single edges.  $\mathcal{A}'$ simply
outputs the partition $\partn = \{C_1, \ldots, C_k\}$, where $C_i$ is
the set of vertices in $T_i$.  We now argue that $\partn$ is a
$(O(\sigma^2),O(\sigma),\gamma)$-partition of $G$.

\begin{lemma}
The strong diameter of each $C_i$ is at most $4 \sigma (\sigma - 1)
\gamma$.
\end{lemma}
\begin{proof}
Fix a $C_i$. It is enough for us to prove that the height of the tree
$T_i$ is at most $2 \sigma (\sigma - 1) \gamma$ as we can reach any
vertex in $C_i$ from any other while remaining within $C_i$ by going
through the root of $T_i$. Assume not. Then there is a vertex $v$ in
tree $T_i$ whose distance in this tree from the root of $T_i$ is more
than $2 \sigma (\sigma - 1) \gamma$. Consider the graph $G'$ with the
root vertex $r$ for which $\mathcal{A}$ returned
$T$. $\cost{T_{\{v\}}}$ is more than $2 \sigma \gamma + 2 \sigma
(\sigma - 1) \gamma = 2 \sigma^2 \gamma$, while $\Opt{\{v\}}$ is $2
\sigma \gamma$. Thus $\frac{\cost{T_{\{v\}}}}{\Opt{\{v\}}} > \sigma$,
which contradicts the fact that $T$ is a $\sigma$-stretch \ust\ for
$G'$.
\end{proof}

\begin{lemma}
For any vertex $v \in V$, $B(v, \gamma)$ intersects at most $2 \sigma$
clusters of $\partn$.
\end{lemma}
\begin{proof} 
The proof is by contradiction.  Suppose there is a vertex $v$ such
that $B(v, \gamma)$ intersects $d > 2 \sigma$ clusters of $\partn$.
We select one vertex from each of these $d$ different clusters such
that the selected vertices lie in $B(v, \gamma)$, and call this set
$S$. Now consider the graph $G'$ with the root vertex $r$ for which
$\mathcal{A}$ returned $T$. Since each vertex in $S$ lies in a
different $T_i$ in $T$, $\cost{T_{S}}$ is at least $2 \sigma \gamma
d$. On the other hand, $\Opt{S}$ is at most $2 \sigma \gamma + d
\gamma = (2 \sigma + d) \gamma$ as $v$ is at a distance $2 \sigma
\gamma$ from $r$ and each of the $d$ vertices in $S$ are at most a
distance $\gamma$ away from $v$.  Thus
$\frac{\cost{T_{\{v\}}}}{\Opt{\{v\}}} = \frac{2 \sigma d}{2 \sigma +
  d} > \sigma$ by our choice of $d$, which again contradicts the fact
that $T$ is a $\sigma$-stretch \ust\ for $G'$.
\end{proof}

The theorem follows from the above two lemmas.
\end{LabeledProof}

\subsection{Proofs for Section~\ref{sec:partn_to_ust}}
\label{proofs:partn_to_ust}

\begin{LabeledProof}{Lemma~\ref{lem:stretch}}
Let $\langle \partn_i\rangle$ denote the given $(\alpha, \beta,
\gamma)$-partition hierarchy.  Fix a non-empty set $X$ of
vertices. Note that $X$ is assumed to not contain the root $r$.  For
each cluster $C$ in the partition hierarchy such that $C \cap (X \cup
\{r\})$ is nonempty, let $v(C)$ denote an arbitrary vertex in $C
\cap (X \cup \{r\})$.

We place an upper bound on the cost of $T_X$, the subgraph of $T$
connecting the vertices in $X$ to the root $r$, as follows.  Let $n_i$
denote the number of clusters in $\partn_i$ that $X \cup \{r\}$
intersects. Since we have defined $\partn_{-1}$ to be the trivial
clustering consisting of a singleton set for each vertex, $n_{-1}$ is
simply $|X \cup \{r\}|$. Let $j$ be the smallest integer such that $X$
is a subset of the cluster in $\partn_j$ that contains $r$. In other
words, $n_j$ equals $1$ and $n_{i} > 1$ for all $-1 \le i < j$. Fix an
$i$, $-1 \le i < j$. Let $C$ be any cluster of $\partn_{i}$ that
intersects $X \cup \{r\}$, and let $C'$ denote the cluster of
$\partn_{i+1}$ that contains $C$.  Since $T$ $\mu$-respects the
partition hierarchy, it follows that the length of the path from
$v(C)$ to $v(C')$ in $T$ is at most $\mu \alpha \gamma^{i+1}$.
Therefore, the cost of $T_X$ is at most $\sum_{-1 \le i < j} n_{i} \mu
\alpha \gamma^{i+1}$. Let $I = \{i \mid (i = j) \vee (-1 \le i < j \wedge
\exists p: n_i \ge 2^p \wedge n_{i+1} < 2^p)\}$. For $\ell \in I$, let
$I_\ell = \{i \mid (-1 \le i \le \ell) \wedge \neg (\exists \ell' \in I: i
\le \ell' < \ell)\}$. We have,
\[
\sum_{i \in I_\ell} n_{i} \mu \alpha \gamma^{i+1} \leq \sum_{i \in
  I_\ell} 2 n_\ell \mu \alpha \gamma^{i+1} \leq \sum_{-1 \le i \le
  \ell} 2 n_\ell \mu \alpha \gamma^{i+1} = O(n_\ell \mu \alpha
\gamma^{\ell + 1})
\]

\junk{ Let $x$ be any in $C \cap X$, and let $P_x$ and $P_y$ denote
  the path from $x$ and $y$, respectively, to $r$ in $T$.  By our
  construction, the subpaths of $P_x$ and $P_y$ in $C'$ are identical,
  and of length at most $(\alpha \beta \gamma)^{i+1}$.  It thus
  follows that the cost of $T_X$ is at most
\[
\sum_{0 \le i \le j} n_{i-1} [(\alpha \beta\gamma)^{i+1} - 1].
\]
}

We next place a lower bound on $\Opt{X}$.  Fix an $i$, $0 \le i < j$.
By the cluster-valence property of the hierarchy, any ball of radius
$\gamma^i$ intersects at most $\beta$ clusters in $\partn_i$.  Thus,
there are at least $\lceil n_i/\beta \rceil$ vertices in $X$ that are
at pairwise distance at least $\gamma^i$ from one another.  This
implies that $\Opt{X}$ is at least $(\lceil n_i/\beta \rceil - 1)
\gamma^i$.  If $\lceil n_i/\beta \rceil = 1$, we invoke the padding
property which says there is at least one vertex in $X$ that is at
distance at least $\gamma_i$ from the root, implying a lower bound of
$\gamma^i$ on $\Opt{X}$.  Combining the two bounds, we obtain a lower
bound of $\Omega(n_i \gamma^i/\beta)$.  For $i=-1$, we also have a
lower bound of $n_{-1}$ since the minimum edge-weight is 1.  Noting
that $|I| = O(\log n)$, we get the stretch of $T(G)$ to be
\[
O\left(\sum_{\ell \in I} \frac{\sum_{i \in I_\ell} n_i \mu \alpha \gamma^{i+1}}{\Opt{X}}\right)
= O\left(\sum_{\ell \in I} \frac{n_\ell \mu \alpha \gamma^{\ell+1}}{n_\ell \gamma^\ell/\beta}\right)
=
O\left(\sum_{\ell \in I} \mu \alpha \gamma^{\ell+1} \beta/
\gamma^{\ell}\right) = O(\mu \alpha \beta \gamma \log n).
\]
\end{LabeledProof}

\subsubsection{Proofs for Section~\ref{sec:bottom-up}}
\label{proofs:bottom-up}

\begin{LabeledProof}{Theorem~\ref{thm:universal}}
Given a graph $G$ and a partition hierarchy $\langle \partn_i \rangle$
for $G$, the algorithm builds a spanning tree $T = T(V)$ by
iteratively building spanning trees for each cluster of the partition
hierarchy in a bottom-up manner. We first show by induction on $i \geq
0$ that for any cluster $C \in \partn_i$, the strong diameter of
$T(C)$ is at most $(\alpha \beta \gamma)^{i+1} - 1$.  The induction
basis directly follows from the strong diameter property of an
$(\alpha, \beta, 1)$-partition.

\junk{
For $i$ from $0$ to $\log_\gamma D$, we repeat the following:
\begin{itemize}
\item
For each cluster $C$ in $\partn_i$, we compute a spanning tree $T(C)$
of $C$ as follows.  Let $S$ be the set of disjoint clusters in
$\partn_{i-1}$ whose union equals $C$.  Let $G_S$ denote a graph with
vertex set $S$ and an edge between cluster $C_1$ and $C_2$ in $S$
whenever there is an edge between $C_1$ and $C_2$ in $G$.  For an edge
$e = (C_1, C_2)$ in $G_S$, let $m_S(e)$ denote an arbitrary edge from
$C_1$ to $C_2$ in $G$.  Compute a shortest path tree $T'$ from an
arbitrary source vertex in $G_S$.  We set $T(C)$ to be the union of
$\cup_{C' \in S} T(C')$ and $\{m_S(e) : e \in T'\}$.
\end{itemize}
}

We now establish the induction step.  Let $C$ be a cluster in
$\partn_i$ and let $u$ and $v$ be two vertices in $C$.  By the
hierarchy property, $C$ is the union of a set, say $S$, of clusters in
$\partn_{i-1}$.  Since $\partn_i$ is an $(\alpha, \beta,
\gamma^i)$-partition, it follows that the strong diameter of $C$ is at
most $\alpha \gamma^i$.  Hence, there exists a path $P$ between $u$
and $v$ in $C$ of length at most $\alpha \gamma^i$.  By the
intersection property, any ball of radius $\gamma^{i-1}$ intersects at
most $\beta$ clusters in $\partn_{i-1}$, and hence at most $\beta$
clusters in $S$.  Therefore, the path $P$ intersects at most $\alpha
\beta \gamma$ clusters in $S$.  Thus, the diameter of $G_S$ is at most
$\alpha \beta \gamma$.  By the induction hypothesis, it follows that
the strong diameter of $T(C)$ is at most $\alpha \beta \gamma - 1 +
\alpha\beta\gamma((\alpha\beta\gamma)^{i} - 1)$, which equals
$(\alpha\beta\gamma)^{i+1} - 1$.

Since $(\alpha\beta\gamma)^{i+1}/(\alpha \gamma^i)$ is at most
$\alpha^k \beta^{k+1}\gamma$ for $k = \log_\gamma n$, it follows that
$T$ $(\alpha^{k-1} \beta^{k+1} \gamma)$-respects the strong partition
hierarchy.  By Lemma~\ref{lem:stretch}, we obtain that $T$ has stretch
at most $O((\alpha \beta)^{\log_\gamma n} \gamma \beta^2 \log n)$,
completing the proof of the theorem.

\junk{
We now show that for appropriate choices of $\alpha$, $\beta$, and
$\gamma$, the stretch of $T(G)$ is at most $2^{\log^{3/4} n}$.

Fix a set $X$ of vertices, that contains the root $r$ and is of size
at least two.  Let $T_X$ denote the subtree of $T(G)$ induced by $X$.
We place an upper bound on the cost of $T_X$ as follows.  Let $n_i$
denote the number of clusters in $\partn_i$ that $X$ intersects.  For
convenience, we define $n_{-1}$ to be $|X|$ and $\partn_{-1}$ to be
the trivial clustering consisting of a singleton set for each vertex.
Let $j$ be the smallest integer such that $X$ is a subset of the
cluster in $\partn_j$ that contains $r$.  Thus $n_j$ equals $1$ and
$n_{i-1} > 1$ for all $0 \le i \le j$.  Fix an $i$, $0 \le i \le j$.
Let $C$ be any cluster of $\partn_{i-1}$ that intersects $X$, and let
$C'$ denote the cluster of $\partn_i$ containing $C$.  Let $x$ and $y$
be two different vertices in $C \cap X$, and let $P_x$ and $P_y$
denote the path from $x$ and $y$, respectively, to $r$ in $T$.  By our
construction, the subpaths of $P_x$ and $P_y$ in $C'$ are identical,
and of length at most $(\alpha \beta \gamma)^{i+1}$.  It thus follows
that the cost of $T_X$ is at most
\[
\sum_{0 \le i \le j} n_{i-1} [(\alpha \beta\gamma)^{i+1} - 1].
\]
We next place a lower bound on $OPT(X)$.  Fix an $i$, $0 \le i < j$.
By the intersection property, any ball of radius $\gamma^i$ intersects
at most $\beta$ clusters in $\partn_i$.  Thus, there are at least
$\lceil n_i/\beta \rceil$ vertices in $X$ that are at pairwise
distance at least $\gamma^i$ from one another.  This implies that
$OPT(X)$ is at least $(\lceil n_i/\beta \rceil - 1) \gamma^i$.  If
$\lceil n_i/\beta \rceil = 1$, we invoke the padding property which
says there is at least one vertex in $X$ that is at distance
$\gamma_i$ from the root, implying a lower bound of $\gamma^i$ on
$OPT(X)$.  Combining the two bounds, we obtain a lower bound of
$\Omega(n_i \gamma^i/\beta)$.  For $i=-1$, we also have a lower bound
of $n_{-1}$ (which in fact supersedes the lower bound for $i=0$).
Thus, the stretch of $T(G)$ is at most 
\[
O\left(\sum_{0 \le i \le j} (\alpha \beta \gamma)^{i+1} \beta/ \gamma^{i-1}\right) = O((\alpha\beta)^j\beta \gamma^2).
\]
Since $j$ is at most $\log_\gamma n$, we obtain an upper bound of
$(\alpha\beta)^{\log_\gamma n} \beta \gamma^2$ on the stretch of $T$,
completing the proof of the lemma.}
\end{LabeledProof}

\begin{figure}[ht]
\begin{center}
\includegraphics[width=5in]{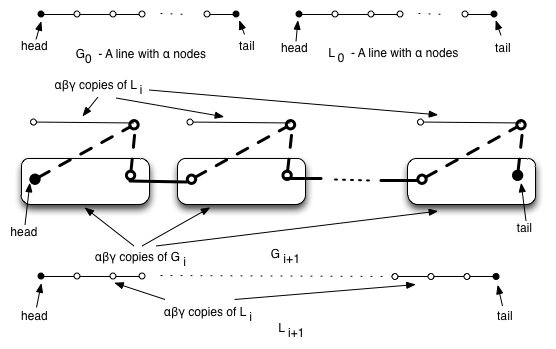}
\caption{Construction for the lower bound of
  Theorem~\ref{thm:lower_bound}. The figure shows the recursive
  construction of graphs $G_i$ and $L_i$. The bold solid edges have
  length $\frac{\gamma^{i+1}}{\beta}$ while the bold dashed edges have
  length $\gamma^{i-1}$. The vertices shown represent the head and the
  tail vertices.}
\label{fig:lower_bound}
\end{center}
\end{figure}

\begin{LabeledProof}{Theorem~\ref{thm:lower_bound}}
The construction is depicted in Figure~\ref{fig:lower_bound}. We
recursively define graphs $G_i$ and $L_i$. We have two distinguished
vertices, called {\em head} and {\em tail}, for each $G_i$ and
$L_i$. Both $G_0$ and $L_0$ are defined to be simply lines with
$\alpha$ vertices and all edges have length $1$. $L_{i+1}$ is defined
recursively by joining together $\alpha\beta\gamma$ copies of $L_i$,
by connecting the head of the successive copy to the tail of the
previous one by an edge of length $1$. The head and tail of $L_{i+1}$
is simply defined as the head of the first copy and the tail of the
last copy of $L_i$ respectively. $G_{i+1}$ is recursively defined by
putting together $\alpha\beta\gamma$ copies of $G_i$ and
$\alpha\beta\gamma$ copies of $L_i$ as shown in
Figure~\ref{fig:lower_bound}. The edges connecting the tail of one
copy of $G_i$ to the head of the successive copy of $G_i$ has length
$frac{\gamma^i}{\beta}$. The two edges connecting a $G_i$ to the
adjacent $L_i$ has length $2\gamma^{i-1}$. The head and tail of
$G_{i+1}$ is simply defined as the head of the first copy and the tail
of the last copy of $G_i$ respectively. 

We can see inductively that the number of vertices in $L_i$ is $\alpha
(\alpha\beta\gamma)^i$, and the number of vertices in $G_i$ is $\alpha
i (\alpha\beta\gamma)^i$. We define $i_0$ to be the solution for
$\alpha i (\alpha\beta\gamma)^i = n$ and define $G_{i_0}$ to be the
graph $G$. Define the head of $G_{i_0}$ to be the root vertex.

The hierarchical partition $\hier$ of $G$ is simply defined as
follows. The clusters at level $i$ are all the copies of $L_i$ and
$G_i$ in $G$. It is not hard to see that $\hier$ is a hierarchical
$(\alpha, \beta, \gamma)$-partition of $G$.

Let $T$ be any spanning tree of $G$ that strictly obeys the hierarchy
$\hier$. It is easy to see that the distance between the head and the
tail of any $G_i$ in the subtree induced by it is at least $\alpha
(\alpha\beta\gamma)^i$. On the other hand, there is a path of length
at most $2\gamma^{i-1}$ between them. We conclude that the stretch of
$T$ is at least $\alpha^{i_o}\beta^{i_0-1}\gamma$ which is
$\Omega((\alpha\beta)^{\frac{\log_\gamma n}{4}} \gamma)$.
\end{LabeledProof}

\subsubsection{Proofs for Section~\ref{sec:top-down}}
\label{proofs:top-down}

\begin{lemma}
\label{lem:tree}
The output $F$ of the algorithm is a spanning forest, each tree 
containing exactly one vertex in $\portals{G}$.
\end{lemma}
\begin{proof}
The proof is by induction on the number of recursive calls to the UST
algorithm.  For the induction base, we consider the case where the
graph consists of a single vertex; in this case, the algorithm returns
the vertex as the forest, which satisfies the desired claim.

For the induction step, we note that the forest $F$ returned is the
union of three sets: (a) union of $\highway{C}$ over all $C$ in
$\partn_\ell$; (b) union of $F(C)$ over all $C$ in $\partn_\ell$; and
(c) $\{m(e): e \in \widehat{F}\}$.  By the induction hypothesis, each
$F(C)$ is a forest spanning $C$, each tree of which contains exactly
one vertex of $\portals{C}$.  We distinguish between two kinds of
clusters.  If $C$ is in $\widehat{S}$, then $F(C)$ is a forest, each
tree of which contains exactly one vertex of $\portals{G} \cap C$.
Otherwise, $F(C)$ is a forest, each tree of which contains exactly one
vertex of $\highway{C}$.  It thus follows that the union of (a) and
(b) above gives a forest for each cluster $C$ satisfying the following
condition: if $C$ is in $\widehat{S}$, the forest contains a spanning
forest of $C$, each tree of which contains exactly one vertex of
$\portals{G} \cap C$; otherwise, the forest contains a spanning tree
of $C$.  

Finally, the edges of (c) connect the clusters not in $\widehat{S}$ to
the clusters in $\widehat{S}$ via a forest.  Consequently, adding
these edges to the forest formed by (a) and (b) yields a spanning
forest over $G$, each tree of which contains exactly one vertex in
$\portals{G}$.
\end{proof}

\begin{lemma}
\label{lem:highway_dist}
Let $F$ be the final forest returned by the algorithm.  For any
cluster $C$, when \ust\ is called on cluster $C$, either $\portals{C}$
is a subset of $\portals{G}$ or for any two vertices $u$ and $v$ in
$S_C$, $\TreeDist{F}{u}{v}$ is at most $\TreeDist{C}{u}{v}$.
\end{lemma}
\begin{proof}
We first prove that for any cluster $C$, the set $\portals{C}$ is
exactly one of the following: (i) $\portals{G}$, if $C$ is $G$; or
(ii) a subset of $\portals{C'}$ for the parent cluster $C'$ of $C$; or
(iii) a subset of nodes on a $\highway{C}$ constructed when processing
parent cluster $C'$.  The proof is by induction on the level of the
hierarchy.  The base case is trivial for $C$ being the whole graph.
For the induction step, consider level $i \le \ell$ of the hierarchy,
and let $C$ be a cluster in $\partn_i$.

We consider two cases.  In the first case, $C$ intersects
$\portals{C'}$ where $C'$ is the parent cluster for $C$.  In this
case, $\portals{C}$ is set to the intersection of $C$ and
$\portals{C'}$ as desired.  In the second case, $C$ is disjoint from
$\portals{C'}$.  In this case, $\portals{C}$ is simply the set of
vertices in $\highway{C}$, again completing the induction step.

To complete the proof of the lemma, consider a cluster $C$ in
$\partn_i$, for some $i$.  If $\portals{C}$ is a subset of
$\portals{G}$, the lemma trivially follows.  If $\portals{C}$ is not a
subset of $\portals{G}$, then by the above claim, $\portals{C}$ equals
the set of nodes in $\highway{C}$.  By our construction, $\highway{C}$
is a shortest path in $C$.  Since $\highway{C}$ is part of $F$, it
follows that for any two vertices $u$ and $v$ in $\portals{C}$,
$\TreeDist{F}{u}{v}$ equals $\TreeDist{C}{u}{v}$.
\end{proof}

\begin{lemma}
\label{lem:rank}
The rank of any cluster $C$ in partition $\partn_\ell$ is at most
$\log(|\partn_\ell|)$.
\end{lemma}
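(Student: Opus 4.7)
The plan is to prove, by induction on the structure of $\widehat{F}$, that if a cluster $C$ has $\rank(C) = k$, then the subtree of $\widehat{F}$ rooted at $C$ contains at least $2^k$ clusters. Since the total number of nodes in $\widehat{F}$ is $|\partn_\ell|$, the bound $\rank(C) \le \log|\partn_\ell|$ will follow immediately. The quantity being analyzed is essentially the Strahler number of the rooted forest $\widehat{F}$, so the doubling argument below is the standard one.

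First I would set up the induction on the height of $C$ in $\widehat{F}$ (equivalently, process clusters in the same order of decreasing distance from $\widehat{S}$ used by the algorithm, so that all children of $C$ are handled before $C$). For the base case, if $C$ is a leaf, then $\rank(C) = 0$ and the subtree rooted at $C$ contains the single node $C$, so it has at least $2^0 = 1$ clusters.

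For the inductive step, let $C$ be internal with $\maxchild = \max\{\rank(C') : C' \text{ is a child of } C\}$. I would split into the two cases that the algorithm distinguishes. If at least two children $C_1, C_2$ of $C$ satisfy $\rank(C_i) = \maxchild$, then the algorithm sets $\rank(C) = \maxchild + 1$, and by the inductive hypothesis the subtrees rooted at $C_1$ and $C_2$ each contain at least $2^{\maxchild}$ clusters, so the subtree rooted at $C$ contains at least $2 \cdot 2^{\maxchild} = 2^{\rank(C)}$ clusters. If on the other hand exactly one child $C_1$ attains $\rank(C_1) = \maxchild$, then the algorithm sets $\rank(C) = \maxchild$, and the subtree rooted at $C$ contains the subtree rooted at $C_1$, which has at least $2^{\maxchild} = 2^{\rank(C)}$ clusters by the inductive hypothesis. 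This completes the induction.

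I do not expect any real obstacle here: the only thing to be careful about is correctly handling the second case, where the rank is inherited rather than incremented, so that the inductive inequality still holds without any slack being wasted. Applying the claim to the root cluster of each tree in $\widehat{F}$ and noting that the subtree contains at most $|\partn_\ell|$ clusters yields $\rank(C) \le \log|\partn_\ell|$ for every cluster $C$, proving the lemma.
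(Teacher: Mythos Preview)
Your proposal is correct and follows essentially the same approach as the paper: both prove by induction on the height in $\widehat{F}$ that the subtree rooted at $C$ has at least $2^{\rank(C)}$ nodes (equivalently, $\rank(C) \le \log m_C$), with the same case split on whether one or at least two children attain $\maxchild$. Your identification of $\rank$ with the Strahler number is apt and the argument is complete.
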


\begin{proof}
Let $\widehat{F}$ denote the shortest path forest in
$\superG{G}{\partn_\ell}$.  We show that for any cluster $C$, the rank
of $C$ is at most $\log(m_C)$, where $m_C$ is the number of nodes in
the subtree of $\widehat{F}$ rooted at $C$.

The proof is by induction on the height of $C$.  The induction basis
is immediate for the leaves of $\widehat{F}$.  We now consider the
induction step.  For cluster $C$, let $r$ denote the rank of the child
with highest rank among all children of $C$.  Let $Z$ denote the set
of children of $C$ that have rank $r$.  We note that $m_C$ is at least
$1 + \sum_{C' \in Z} m_{C'}$.  Furthermore, by the induction
hypothesis, $m_{C'}$ is at least $2^{r}$, for each $C'$ in $Z$.  We
consider two cases.  If $|Z|$ is $1$, then the rank of $C$ equals $r$,
which is at most $\log(m_C)$ by the induction hypothesis.  In $|Z| \ge
2$, then the rank of $C$ equals $r + 1$; since $m_C$ is at least $1 +
2\cdot 2^r > 2^{r+1}$, the induction step follows, completing the
proof.
\end{proof}

\begin{lemma}
\label{lem:dist_to_portal}
Let $F$ be the final forest returned by the algorithm.  If $\gamma \ge
3\log n$, then for any cluster $C$ in $\partn_i$ and vertex $u$ in $C$,
$\TreeDist{F}{u}{\portals{C}}$ is at most $3\alpha^2 \beta \gamma^i$.
\end{lemma}
\begin{proof}
We prove by induction on level $i$ that $\TreeDist{F}{u}{\portals{C}}$
is at most $3 \alpha^2 \beta \gamma^i$, with the base case being $i =
0$.  In this case, the cluster and its portal set are the same
singleton vertex set, trivially yielding the desired claim.  For the
induction step, we consider $i > 0$.  Let $C$ be a cluster of
$\partn_i$.  For any vertex $u$ in $C$, let $C_u$ denote
$\partn_{i-1}(u)$, that is, the cluster in partition $\partn_{i-1}$
that contains $u$.

As in the algorithm, let $\widehat{S}$ denote the set of clusters in
the partition of $C$ that intersect $\portals{C}$.  Let $C_u = C_0,
C_1, \ldots, C_k$, where $C_k \in \widehat{S}$, denote the sequence of
clusters in the unique path from $C_u$ to $\widehat{S}$ in
$\superG{G}{\partn_\ell}$, which we refer to as the supergraph in the
following argument.  Note that $C_i$ is the parent of $C_{i-1}$ in the
supergraph.  By our argument in the proof of
Theorem~\ref{thm:universal}, we know that $k$ is at most $\alpha \beta
\gamma$.  We now argue that there are at most $\log n$ elements $C_i$
in the sequence such that $C_i$ is not $\favorite(C_{i+1})$.  To see
this, we note that if $C_i$ is not $\favorite(C_{i+1})$, then
$\rank(C_{i+1})$ strictly exceeds $\rank(C_i)$.  Since the rank of any
cluster is at most $\log n$ by Lemma~\ref{lem:rank}, the desired claim
holds.

This sequence of clusters induces a path from $u$ to $\portals{C}$,
which consists of (a) the connecting edges in the supergraph, (b) the
highway in each cluster $C_i$ in the sequence, (c) for each cluster
$C_i$ such that $C_{i-1}$ is not a favorite of $C_{i}$, the unique
path in $F(C_i)$ (and, hence, in $F$) that connects the head of the
edge connecting $C_{i-1}$ and $C_i$ to $\portals{C_i}$.  Since the
number of clusters in the sequence is at most $\alpha \beta \gamma$,
and the highway in each cluster is a shortest path of length at most
$\alpha \gamma^{i-1}$, the total length of the paths in (a) and (b) is
at most $2\alpha^2 \beta \gamma^i$.  The number of clusters in (c) is
at most $\log n$, and by the induction hypothesis, the length of each
path in (c) is at most $3\alpha^2 \beta \gamma^{i-1}$.  We thus have,
\begin{eqnarray*}
\TreeDist{F}{u}{\portals{C}} & \le & 2\alpha^2 \beta \gamma^i + (3\log
n) \alpha^2 \beta \gamma^{i-1}\\ 
& \le & 3\alpha^2 \beta \gamma^i
\end{eqnarray*}
for $\gamma\ge 3 \log n$, thus completing the proof of the lemma.
\end{proof}

\begin{LabeledProof}{Lemma~\ref{lem:dist_in_ust}}
We show that for any cluster $C$ in $\partn_i$, and vertices $u, v$ in
$C$, $\TreeDist{F}{u}{v}$ is at most $7\alpha^2 \beta \gamma^i$; this
will establish the desired claim.  By Lemma~\ref{lem:dist_to_portal},
$\TreeDist{F}{u}{S_C}$ and $\TreeDist{F}{v}{S_C}$ are both at most
$3\alpha^2 \beta \gamma^i$.  By Lemma~\ref{lem:highway_dist}, for any
two nodes $x$ and $y$ in $S_C$, $\TreeDist{F}{x}{y}$ is at most the
strong diameter of $C$, which is at most $\alpha \gamma^i$.  Putting
these three distances together, we obtain that $\TreeDist{F}{u}{v}$ is
at most $7 \alpha^2 \beta \gamma^i$.
\end{LabeledProof}

\begin{LabeledProof}{Theorem~\ref{thm:partn_to_ust}}
By Lemma~\ref{lem:tree}, the output $F$ is a spanning forest, each
tree of which contains exactly one vertex of $\portals{G}$.  Since
$\portals{G}$ has only one vertex, the forest $F$ returned is a tree.
By Lemma~\ref{lem:dist_in_ust}, $F$ $(7 \alpha \beta)$-respects
$\hier$.  By Lemma~\ref{lem:stretch}, we obtain that $F$ has stretch
$O(\alpha^2 \beta^2 \gamma \log n)$.
\end{LabeledProof}

\section{Proofs for Section~\ref{sec:hierarchical}}
\label{proofs:hierarchical}
\begin{algorithm}[ht!]
\caption{Algorithm to obtain a partition hierarchy for general graphs}
\label{alg:hierarchical_partition}
\begin{algorithmic}[1]

\REQUIRE A weighted graph $G=(V,E,w)$, integer $k$, $\gamma \geq
\frac{1}{\epsilon}((\frac{4}{3}+\epsilon)4^{k-1} - \frac{4}{3})$ 

\ENSURE A hierarchical $(\alpha=(\frac{4}{3}+\epsilon)4^{k-1}
- \frac{4}{3},\beta=kn^{\frac{1}{k}},\gamma)$-partition of $G$

\STATE Define $\partn_{-1}$ to be the trivial partition where each
vertex of $V$ is in its own cluster, \\ i.e., $\partn_{-1} = \{\{v\}: v
\in V\}$.

\FOR{level $i$ from $0$ to $\lceil\log_\gamma(\frac{\Diam{G}}{\alpha})
  \rceil$}

\STATE $S^i_0 = \mathcal{P}_{i-1}$.
\STATE $S^i_j = \emptyset$ for all $1 \leq j \leq k-1$.
\STATE $j \leftarrow 1$.

\WHILE{$j < k$ and $S^i_{j-1} \neq \emptyset$}

\WHILE{there exists a $v$ such that $v \in C_v$ for some $C_v \in S^i_{j_v}$ and $j_v < j$, \\
and $B(v,\gamma^i)$ intersects more than $n^{\frac{1}{k}}$ clusters from $S^i_{j-1}$}
\STATE Delete $C_v$ from $S^i_{j_v}$, i.e., $S^i_{j_v} = S^i_{j_v} \setminus \{C_v\}$.
\STATE Delete all the clusters of $S^i_{j-1}$ that $B(v,\gamma^i)$ intersects from it, i.e., \\
$S^i_{j-1} = S^i_{j-1} \setminus \{C: C \in S^i_{j-1} \wedge B(v,\gamma^i) \cap C \neq \emptyset\}$.
\STATE Merge $C_v$ and all the clusters deleted from $S^i_{j-1}$ and add to $S^i_j$, i.e., \\
$S^i_j = S^i_j \cup C_v \cup \left(\bigcup_{C \in {\cal X}} C\right)$, where ${\cal X}$ equals $\{C \in S^i_{j-1}: B(v,\gamma^i) \cap C \neq \emptyset\}$.
\ENDWHILE

\WHILE{there exists a $v$ such that $v \in C_v$ for some $C_v \in S^i_j$, \\
and $B(v,\gamma^i)$ intersects more than $n^{\frac{1}{k}}$ clusters from $S^i_{j-1}$}
\STATE Delete $C_v$ from $S^i_j$, i.e., $S^i_j = S^i_j \setminus \{C_v\}$.
\STATE Delete all the clusters of $S^i_{j-1}$ that $B(v,\gamma^i)$ intersects from it, i.e., \\
$S^i_{j-1} = S^i_{j-1} \setminus \{C: C \in S^i_{j-1} \wedge B(v,\gamma^i) \cap C \neq \emptyset\}$.
\STATE Merge $C_v$ and all the clusters deleted from $S^i_{j-1}$ and add to $S^i_j$, i.e., \\
$S^i_j = S^i_j \cup C_v \cup \left(\bigcup_{C \in {\cal Y}} C\right)$, where ${\cal Y}$ equals $\{C \in S^i_{j-1}: B(v,\gamma^i) \cap C \neq \emptyset\}$.
\ENDWHILE

\STATE $j = j + 1$.

\ENDWHILE

\STATE $\mathcal{P}_i = \cup_{t = 0}^{t = k-1} S^i_t$.

\ENDFOR

\STATE Output $(\partn_0, \ldots, \partn_{\lceil\log_\gamma(\frac{\Diam{G}}{\alpha}) \rceil})$.

\end{algorithmic}
\end{algorithm}

We have the following claims that bound the size and diameter of the
clusters of level $i$.

\begin{lemma}
\label{lem:cluster_size}
The size (number of vertices) of a cluster of rank $j$ at any level is
at least $n^{\frac{j}{k}}$.
\end{lemma}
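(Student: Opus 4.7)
The plan is to proceed by induction on the rank $j$, exploiting the fact that every creation of a rank-$j$ cluster at level $i$ is triggered by finding a witness vertex $v$ whose ball $B(v,\gamma^i)$ intersects strictly more than $n^{1/k}$ clusters from $S^i_{j-1}$, i.e., clusters of rank exactly $j-1$ at that moment. The bound $n^{j/k}$ is precisely what one gets by multiplying the threshold $n^{1/k}$ by the inductive bound $n^{(j-1)/k}$, so essentially all the work is in book-keeping.

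For the base case $j=0$, I would observe that each rank-$0$ cluster at level $i$ is by definition a cluster inherited from $\partn_{i-1}$ (or a singleton of $\partn_{-1}$ when $i=0$), and so it trivially has at least $1 = n^{0/k}$ vertex.

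For the inductive step, assume that at every point during the construction of $\partn_i$, every cluster of rank $j-1$ has size at least $n^{(j-1)/k}$. A new rank-$j$ cluster is created only in the two inner \textbf{while}-loops of stage $j$: in phase~1 a cluster $C_v$ of some rank $j_v<j$ is merged with every rank-$(j-1)$ cluster that $B(v,\gamma^i)$ meets, and in phase~2 an existing rank-$j$ cluster $C_v$ is augmented with every rank-$(j-1)$ cluster that $B(v,\gamma^i)$ meets. By the triggering condition, the number of absorbed rank-$(j-1)$ clusters is strictly greater than $n^{1/k}$ in both cases, and these clusters are pairwise disjoint since they belong to the current partition $S^i_{j-1}$. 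Applying the inductive hypothesis to each absorbed cluster, the vertices contributed by them alone number at least $n^{1/k}\cdot n^{(j-1)/k}=n^{j/k}$, which dominates the required bound for the newly formed (or enlarged) rank-$j$ cluster. Phase~2 only grows existing rank-$j$ clusters, so the invariant persists throughout the stage.

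The hard part will be essentially nothing beyond being careful that (i)~the $>n^{1/k}$ clusters named in the triggering condition are indeed of rank exactly $j-1$ at the moment of the merge (so the inductive hypothesis applies to each), and (ii)~these absorbed clusters are disjoint so that their sizes add rather than overlap. Both follow immediately from the fact that they come from the current set $S^i_{j-1}$, which by construction is a subfamily of the partition of $V$ maintained at every step of the algorithm.
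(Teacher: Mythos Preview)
Your proof is correct and follows essentially the same approach as the paper: induct on the rank $j$, use the triggering condition that more than $n^{1/k}$ pairwise-disjoint rank-$(j-1)$ clusters are absorbed, and multiply by the inductive bound. The paper's version is terser, but your added bookkeeping about disjointness and the exact rank of the absorbed clusters is just an explicit unpacking of the same argument.
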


\begin{proof}{Lemma~\ref{lem:cluster_size}}
We prove the claim using induction on $j$. For $j = 0$, the claim
follows trivially as each cluster of rank $0$ has size at least
$1$. For the induction step, observe that a cluster of rank $j$
contains more than $n^{\frac{1}{k}}$ clusters of rank $j-1$ which all
have size at least $n^{\frac{j-1}{k}}$ by the induction hypothesis.
\end{proof}

\begin{corollary}
\label{cor:rank}
At any level of the hierarchy, The rank of a cluster can at most be
$k-1$.
\end{corollary}

\begin{proof}{Corollary~\ref{cor:rank}}
From Lemma~\ref{lem:cluster_size}, it follows that at any level of the
hierarchy there can be at most $\frac{n}{n^{\frac{k-1}{k}}} =
n^{\frac{1}{k}}$ clusters of rank $k-1$ which immediately implies that
no cluster of rank $k$ ever gets formed.
\end{proof}

\begin{lemma}
\label{lem:cluster_diameter}
Fix $\epsilon > 0$. The strong diameter of every cluster of level $i$
and rank $j$ is at most $\gamma^i ((\frac{4}{3} + \epsilon)4^j -
\frac{4}{3})$, provided $\gamma \geq \frac{1}{\epsilon}((\frac{4}{3} +
\epsilon)4^{k-1} - \frac{4}{3})$.
\end{lemma}

\begin{proof}
We prove the claim by induction on $i$ and $j$. The case for $i=-1$ is
trivially true. For the case of $i \geq 0$, assume the claim to be
true for clusters of all rank at level $i-1$. Since a cluster of rank
$0$ at level $i$ is simply one of these clusters, its diameter is
bounded by $\gamma^{i-1} ((\frac{4}{3} + \epsilon)4^{k-1} -
\frac{4}{3})$ by the induction hypothesis Corollary~\ref{cor:rank}.
This is at most $\gamma^i((\frac{4}{3} + \epsilon)4^0 - \frac{4}{3}) =
\gamma^i \epsilon$ by our assumption that $\gamma \geq
\frac{1}{\epsilon}((\frac{4}{3} + \epsilon)4^{k-1} -\frac{4}{3})$
which proves the claim for level $i$ and rank $0$.

Now assume that the claim is true for level $i$ and all rank at most
$j-1$, and consider a cluster $C$ at level $i$ and rank $j$. There are
two cases to consider depending upon whether $C$ was formed in the
first or second phase of stage $j$.

If $C$ was formed in the first phase, then there was a vertex $v$ in a
cluster of rank at most $j-1$ such that $C$ is the union of the
cluster containing $v$ and all the clusters of rank $j-1$ that the
ball $B(v,\gamma^i)$ intersected. By the induction hypothesis, the
strong diameters of all these clusters which were merged to form $C$
are bounded by $\gamma^i ((\frac{4}{3} + \epsilon)4^{j-1} -
\frac{4}{3})$. This implies that any vertex in $C$ is at most a
distance $\gamma^i ((\frac{4}{3} + \epsilon)4^{j-1} - \frac{4}{3}) +
\gamma^i$ from $v$. Thus the strong diameter of $C$ is at most
$2\gamma^i ((\frac{4}{3} + \epsilon)4^{j-1} - \frac{4}{3} + 1) \leq
\gamma^i ((\frac{4}{3} + \epsilon)4^j - \frac{4}{3})$ as $j \geq 1$.

If $C$ was formed in the second phase, it implies that there was a
cluster $C'$ of rank $j$ which was formed in the first phase of stage
$j$ and got merged with other clusters to form $C$ in the second
phase. By the argument above, the strong diameter of $C'$ was at most
$2\gamma^i ((\frac{4}{3} + \epsilon)4^{j-1} - \frac{4}{3}) +
1)$. Furthermore, we know that any vertex in $C$ either comes from
$C'$ or from some cluster of rank $j-1$ which intersects the ball
$B(v,\gamma^i)$ for a vertex $v$ contained in $C'$. From the above
facts and the induction hypothesis, we conclude that the strong
diameter of $C$ is bounded by $2\gamma^i ((\frac{4}{3} +
\epsilon)4^{j-1} - \frac{4}{3}) + 1) + 2\gamma^i +
2\gamma^i((\frac{4}{3} + \epsilon)4^{j-1} - \frac{4}{3}) =
\gamma^i((\frac{4}{3} + \epsilon)4^j - \frac{4}{3})$.
\end{proof}

\begin{LabeledProof}{Theorem~\ref{thm:hierarchical_partition}}
The bound on cluster diameter is given by
Lemma~\ref{lem:cluster_diameter} and Corollary~\ref{cor:rank}. For the
intersection bound, observe that for any level $i$ of the hierarchy
and any vertex $v$, the ball $B(v,\gamma^i)$ can intersect at most
$n^{\frac{1}{k}}$ clusters of a given rank.  This implies that
$B(v,\gamma^i)$ can intersect at most $kn^{\frac{1}{k}}$ clusters in
total from level $i$ as every cluster has rank between $0$ and $k-1$.
\end{LabeledProof}

\section{Proofs for Section~\ref{sec:cluster}}
\label{proofs:cluster}
\begin{algorithm}[ht!]
\caption{The Cluster Aggregation algorithm}
\label{alg:merger}
\begin{algorithmic}[1]
  \REQUIRE An undirected graph $G$, partition ${\cal P}$, set $S$ of
  portals.  

  \ENSURE A cluster aggregation $\cntr$

  \STATE For each set $X$ in $\partn$, let $p_X$ denote a shortest
  path from $X$ to $S$, and let $P_X$ denote the sequence of clusters
  visited in $p_X$.

  \STATE For a cluster $Y$ that appears in $P_X$, define the {\em
    position}\/ of a cluster $Y$ in $P_X$ to be $\ell$ if the number
  of distinct clusters that $P_X$ visits before first visiting $Y$ is
  $\ell-1$.

  \STATE Construct an auxiliary directed graph $D$ whose vertices are
  the clusters of $\pn$.  For vertices $X$ and $Y$, $D$ has an edge
  from $X$ to $Y$ if $P_X$ contains $Y$; furthermore, we label the
  edge $(X,Y)$ with the position of $Y$ in $P_X$.

  \STATE Set $i$ to be $0$ and $V_0$ to be the set of vertices in $D$.

  \REPEAT[Begin Phase~$i$]

  \STATE Let $D_i$ denote the subgraph of $D$ induced by $V_i$.  Let
  $E_i$ denote the set of edges in $D_i$.  Set $V_{i+1}$ to
  $\emptyset$ and $\widehat D$ to $D_i$.

  \REPEAT

  \STATE Let $v$ be an arbitrary vertex in $\widehat{D}$.  

  \IF{$i = 0$}
  
  \STATE Set $\cntr(v)$ to be the vertex in $S$ nearest to $v$;

  \ELSE
  
  \STATE Set $\cntr(v)$ to be $\cntr(x)$ where $x$ is a vertex in
  $V_{i-1} - V_i$ and the label of $(v,x)$ is the least among
  all edges from $v$ to $V_{i-1} - V_i$.  

  \ENDIF

  \STATE Let $T$ denote $\{v\} \cup \outE{\hH}{\{v\}}$.

  \REPEAT[iteration]

  \STATE
  For each $u$ in $\hH - T$, and each edge $(u,w)$ in $\hH$,
  remove $(u,w)$ from $\hH$ if there exists an edge $(u,x)$ in $\hH$
  with $x \in T$ such that the label of $(u,x)$ is smaller than the
  label of $(u,w)$.

  \STATE For each $u$ in $\inE{\hH}{T} \cup \outE{\hH}{T \cup
    \inE{\hH}{T}}$, set $\cntr(u)$ to be equal to $\cntr(v)$.  Set $T$
  equal to $T \cup \inE{\hH}{T} \cup \outE{\hH}{T \cup \inE{\hH}{T}}$.
  
  \UNTIL{$|\inE{\hH}{T}| < |T|$}.
    
  \STATE Set $V_{i+1}$ to $V_{i+1} \cup \inE{\hH}{T}$ and remove
  $T \cup \inE{\hH}{T}$ from $\hH$.

  \UNTIL{$\hH$ is empty}

  \STATE Increment $i$ \COMMENT{End Phase~$i$}

  \UNTIL{$V_{i}$ is $\emptyset$} 

\end{algorithmic}
\end{algorithm}

\begin{LabeledProof}{Lemma~\ref{lem:phase}}
We first note that $V_{i+1} \subseteq V_i$.  Furthermore, in each
iteration of the $i$th phase, when we add $\inE{\hH}{T}$ to $V_{i+1}$,
$|\inE{\hH}{T}|$ is less than $|T|$, where $T$ is a subset of $V_i -
V_{i+1}$.  Thus, $|V_i| - |V_{i+1}| \ge |V_{i+1}|$, yielding the
desired claim.
\end{LabeledProof}

For each $r_i$ in $S$, let $C(r_i)$ denote the union of the clusters
$X$ such that $\cntr(X) = r_i$.  Note that $C(r_i)$ may vary as the
algorithm progresses.

\begin{LabeledProof}{Theorem~\ref{thm:max-detour}}
Let $m$ equal $|\partn|$, the number of clusters in $\partn$.  Fix a
portal $r$ in $S$.  We will show that at the end of iteration $j$ of
phase~$i$, the following holds:
\begin{itemize}
\item
For any $Z$ in $\partn$, if $\cntr(Z)$ equals $r$, then for each
vertex $v$ in $Z$, there is a path in $\subG{G}{C(r)}$ from $v$ to
$\cntr(Z)$ of weight at most $2((i-1) \log (|\partn|) + j)
\MaxDiam(\partn)$ more than $\dist{Z}{S}$.
\end{itemize}
Before we establish the above claim, we show how the statement of the
theorem follows.  By Lemma~\ref{lem:phase}, the number of phases is at
most $\log m$.  Furthermore, the number of iterations of the inner
repeat loop in each phase is at most $\log m$ since the size of $T$ at
least doubles in each iteration.  Therefore, at termination, the
detour for each cluster in $\partn$ is at most $2(\log^2 m)
\MaxDiam(\partn)$, yielding the desired claim.  \junk{ subgraph
  induced by $C(r)$ for any center $r$ of $S$ has diameter at most
  $\gamma^i + 2\alpha \log^2 n \gamma^{i-1} \le \alpha \gamma^i$, if
  $\alpha$ and $\gamma$ satisfy $\gamma + 2\alpha \log^2 n \le \alpha
  \gamma$; setting $\gamma \ge 4 \log^2 n$ and $\alpha \ge 2$
  suffices.  }

Consider an iteration $j$ of phase $i$.  In the following, $T$ and
$\hH$ refer to the variables in the above algorithm at the start of
the iteration.  The set of clusters for which we set the $\cntr$
values in the iteration is given by $\inE{\hH}{T} \cup \outE{\hH}{T
  \cup \inE{\hH}{T}}$, where $T$ corresponds to the value of the
variable at the start of the iteration.  Every cluster in $T$ shares
the same $\cntr$ value, say $x$.  By the induction hypothesis, at the
start of iteration $j$ of phase $i$, each cluster $Y$ in the set of
clusters with $\cntr$ equal to $x$ has a path $q_Y$ in
$\subG{G}{C(x)}$ from $Y$ to $x$ of weight at most $2((i-1)\log m +
(j-1))\MaxDiam(\partn)$ more than $\dist{Y}{S}$.

Consider a vertex $Z$ in $\inE{\hH}{T}$.  Since $Z$ is in
$\inE{\hH}{T}$, its path $p_Z$ contains a cluster $Y$ in $T$.  Let
$p'$ denote the prefix of the path $p_Z$ that connects $Z$ to the
first occurrence of $Y$ in $p_Z$; and let $p''$ denote the remainder
of the path $p_Z$.  We note that every cluster that appears in $p'$ is
in $\outE{\hH}{\{z\}}$, and is, hence, also in $\outE{\hH}{T \cup
  \inE{\hH}{T}}$.  Thus, at the end of iteration $j$, $p'$ is fully
contained in $\subG{G}{C(x)}$ the subgraph of $G$ induced by the set
of vertices with $\cntr$ equal to $x$.  The weight of $p_Z$ equals the
sum of the weights of $p'$ and $p''$.  The weight of $p_Y$ is at most
the weight of $p''$.  Thus, the path from $Z$ to $x$ consisting of
$p'$, followed by a shortest path to $p_Y$ in $Y$, and followed by the
path $q_Y$ is entirely contained in $\subG{G}{C(x)}$ and has weight at
most $2((i-1)\log m + j)\MaxDiam(\partn)$ more than the length of
$p_Z$.  (This is because the weight of any shortest path in $Y$ is at
most $\MaxDiam(\partn)$.)  This completes the induction step of the
proof.
\end{LabeledProof}

\section{Proofs for Section~\ref{section:minor-free}}
\label{proofs:minor-free}
\begin{algorithm}[ht!]
\caption{Clustering of minor-free graph connected component}
\label{alg:minor-separator}
\begin{algorithmic}[1]
  \REQUIRE Connected component $\Phi$ of minor-free graph $G$,
  strong $(\alpha, \beta, \gamma^{i-1})$-partition $\partn_{i-1}$ of $G$,
  set $\cN$ with coarsen clusters of $\partn_{i-1}$.

  \ENSURE Coarsening the $\partn_{i-1}$ clusters in $\Phi$;
          the resulting clusters are inserted into $\cN$.

  \STATE Let $S = S_{1} \cup S_{2} \cup \cdots \cup S_{l}$
         be a $k$-path separator of $\Phi$.

  \FOR{$\chi$ from $1$ to $l$}

  \FOR{each path $p \in S_{\chi}$}

  \STATE  Let $\Psi$ be the connected component 
          of $\Phi \setminus \bigcup_{1 \leq j < \chi} S_j$ 
          in which $p$ resides.
  
  \STATE Invoke Algorithm~\ref{alg:path-clustering} on connected component $\Psi$ and path $p$,
              and parameters $\partn_{i-1}$ and $\cN$.
  
  \STATE Update $\cN$ to be the result of Algorithm~\ref{alg:path-clustering} . 

  \ENDFOR

  \ENDFOR
  
  \FOR{each connected component $\Upsilon \in \Phi \setminus S$}
  
  \STATE Invoke (recursively) Algorithm \ref{alg:minor-separator} 
         with parameters  $\Upsilon$, $\partn_{i-1}$, and $\cN$.
         
  \STATE Update $\cN$ to be the result of the recursive invocation. 
  
  \ENDFOR
  
  \STATE Return $\cN$.
\end{algorithmic}
\end{algorithm}

\begin{algorithm}[ht!]
\caption{Path clustering in connected component}
\label{alg:path-clustering}
\begin{algorithmic}[1]
  \REQUIRE Connected component $\Psi$ of minor-free graph $G$,
  path $p$ in $\Psi$,
  strong $(\alpha, \beta, \gamma^{i-1})$-partition $\partn_{i-1}$ of $G$,
  set $\cN$ with coarsen clusters of $\partn_{i-1}$.

  \ENSURE Coarsening the $\partn_{i-1}$ clusters in $\Psi$  which are at distance at most $2^i$ from  $p$;
          the resulting clusters are inserted in $\cN$.

  \STATE Let $\partn^\Psi_{i-1} = \{X \in \partn_{i-1}: X \subseteq V(\Psi)\}$
         be the {\em integral} clusters of $\partn_{i-1}$
         which are completely contained within $\Psi$; 

  \STATE Let $\cA = \{X \in \partn^\Psi_{i-1}: 
         (\mydist_{\Psi}(X, p) \leq 2 \gamma^i) 
         \wedge (X \cap V(\cN) = \emptyset)\}$
         be the all integral clusters of $\Psi$ which have not yet 
         been coarsen (do not belong in  $\cN$) and are
         within distance $2 \gamma^i$ from $p$ in $\Psi$. 

  \STATE Let $\cB = \{X \in \cA:
		 \exists (u,v) \in E(\Psi), 
		 u \in X \wedge v \in V(\cN) 
		 \cup (V(\Psi)\setminus V(\partn^\Psi_{i-1}) \}$,
		 contains all the clusters of $\cA$
		 which are adjacent to clusters in $\cN$
		 or adjacent to non-integral clusters in $\Psi$. 

  \STATE Let $\Psi' = \Psi \cap V(\cA)$ be the sub-graph of $\Psi$ induced 
         by $V(\cA)$ (note that $\Psi'$ may not be connected).      
         
  \STATE Let $L$ be the {\em leaders} of path $p$, 
		 which is a maximal set of nodes in $p \cap \Psi'$,
		 such for any pair $u,v \in L$,
		 $\mydist_p(u,v) \geq \gamma^i$,
		 and $u$ and $v$ cannot belong to the same cluster of $\cA$.

  \STATE Let $U$ be the set that consists of one arbitrary node from each cluster in  $\cB$
               (for each cluster in $\cB$ that does not contain a leader in $L$).
 
  \STATE Combine the clusters in $\cA$ 
         by invoking Algorithm \ref{alg:merger}
         to each connected component of $\Psi'$ for the induced clusters 
         from $\cA$ and the induced portal nodes in $L \cup U$.
         
  \STATE Let $\cR$ be the union of the resulting set of clusters 
         from Algorithm \ref{alg:merger}.
         
  \STATE Write $\cR = \cI_p \cup \cK_p$ where
         $\cI_p$ consists of clusters that contain a node of $L$,
         and $\cK_p$ consists of clusters that contain a node of $U$.
         
  \FOR{each cluster $X \in \cK_p$ 
            such that $X$ is adjacent to a cluster $Y \in \cN$
           such that there is an edge $(u,v) \in E(\Psi)$,
           $u \in X$, $v \in Y$, and $v \notin \Psi'$}

  \STATE $X$ merges with $Y$ and the new cluster is inserted back in $\cN$.
   
  \STATE Remove $X$ from $\cK_p$. 
     
  \ENDFOR  
  
  \STATE Update $\cN = \cN \cup \cI_p \cup \cK'_p$, where $\cK'_p$ are the remaining clusters of $\cK_p$.
  
  \STATE Return $\cN$.
\end{algorithmic}
\end{algorithm}

Consider a node $t \in T$.
Each path $p \in S(t)$ has a respective processing order
in $S(t)$, denoted $order(p)$, which is a unique
integer between 1 and $k$.
The set of {\em previous} paths of $p$, denoted $Q(p)$,
is defined to include those
paths in $S(t)$ which have smaller order,
or the paths in the ancestors of $t$:
$$Q(p) = \{q \in S(t): order(q) < order(p) \} 
\cup \{q \in S(w): \mbox{$w$ is ancestor of $t$}\}.$$
Let $\cI_p$ denote the clusters that
belong to $p$ immediately after $p$ is processed.  Let $\wI_p$ denote
the final clusters of $p$ in $\partn_{i}$.  In the analysis below, we
assume that $\gamma \ge \alpha$, which we can satisfy in our
construction.

\begin{lemma}
\label{thm:minor-distance}
In $\Psi'$ every cluster of $\cA$
is within distance at most $3\gamma^i$ to a node in $L \cup U$. 
\end{lemma}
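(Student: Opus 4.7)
The plan is to do a case analysis based on whether $X \in \cB$ and, when $X \notin \cB$, whether a shortest witness path to $p$ stays inside $\Psi'$. The case $X \in \cB$ is immediate: by construction $X$ contains a node of $U$, so $\mydist_{\Psi'}(X, L \cup U) = 0$. From now on assume $X \in \cA \setminus \cB$. Because $X \in \cA$, there is a shortest path $q$ in $\Psi$ of length at most $2\gamma^i$ from some $x \in X$ to some $y \in p$. Write the sequence of clusters traversed by $q$ as $X = X_0, X_1, \ldots, X_k$ (with $y \in X_k$); each $X_j$ is at $\Psi$-distance at most $2\gamma^i$ from $p$.

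Case~1: some $X_j$ is not in $\cA$. Let $j^*$ be the smallest such index, and let $X' = X_{j^*-1} \in \cA$. Then $X_{j^*}$ is either already in $\cN$ or else not entirely contained in $\Psi$ (i.e., belongs to $V(\Psi)\setminus V(\partn^\Psi_{i-1})$), and $X'$ is adjacent to $X_{j^*}$ via an edge of $\Psi$. This is precisely the defining condition for $X' \in \cB$, so $X'$ contributes a node to $U$. The prefix $X_0, \ldots, X_{j^*-1}$ of $q$ lies entirely inside $\Psi'$, yielding $\mydist_{\Psi'}(X, U) \le 2\gamma^i$.

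Case~2: every $X_j$ lies in $\cA$, so $q \subseteq \Psi'$ and $y \in p \cap \Psi'$. If $y \in L$, we conclude $\mydist_{\Psi'}(X, L) \le 2\gamma^i$. Otherwise, maximality of $L$ forces a leader $z \in L$ such that either (i) $y$ and $z$ lie in the same cluster $W \in \cA$, or (ii) $\mydist_p(y, z) < \gamma^i$. In subcase~(i), $W \subseteq V(\Psi')$ and $W$ has strong diameter at most $\alpha\gamma^{i-1} \le \gamma^i$ (since $\gamma \ge \alpha$), so $\mydist_{\Psi'}(y, z) \le \gamma^i$ and the total bound is $3\gamma^i$.

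The only delicate point is subcase~(ii): walking along $p$ from $y$ to $z$ uses at most $\gamma^i$ of length, but intermediate nodes of $p$ might lie outside $\Psi'$. The key observation that resolves this is the same mechanism used in Case~1: if $w$ is the first node on the $p$-subpath from $y$ to $z$ that leaves $\Psi'$, then $w$'s cluster is in $\cN$ or non-integral, while the preceding node $w'$ on $p$ is in $\Psi'$ and its cluster is therefore in $\cB$, contributing a node of $U$ at $\Psi'$-distance less than $\gamma^i$ from $y$. If no such $w$ exists, the whole subpath lies in $\Psi'$ and reaches $z \in L$. Either way we add at most $\gamma^i$ to the $2\gamma^i$ bound on $\mydist_{\Psi'}(X, y)$, giving the claimed $3\gamma^i$. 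The main obstacle I expect to have to present carefully is precisely this interplay between the leader spacing on $p$ and the fact that $p$ may punch through already-absorbed clusters; the argument turns that apparent difficulty into a guaranteed $U$-node nearby.
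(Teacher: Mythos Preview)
This is essentially the paper's own argument: take a shortest $\Psi$-path $q$ from $X$ to $p$; if $q$ leaves $\cA$, the last $\cA$-cluster along it lies in $\cB$ and supplies a nearby $U$-node; otherwise walk at most $\gamma^i$ further along $p$ from the landing point to find either a leader (by maximality of $L$) or, if the walk leaves $\Psi'$, again a $\cB$-cluster. The paper packages this second step via a fixed subpath $p'$ of $p$ of radius $\gamma^i$ around the landing point rather than via your explicit leader $z$, but the content and bounds are identical, and both write-ups are equally casual about the extra $\alpha\gamma^{i-1}\le\gamma^i$ needed to reach the actual $U$-node inside a $\cB$-cluster (so your ``$\le 2\gamma^i$'' in Case~1 and ``$<\gamma^i$'' in the exit sub-case of~(ii) should each carry this additive term; the lemma's $3\gamma^i$ still holds in Case~1, and the exit sub-case gives $3\gamma^i+\alpha\gamma^{i-1}$, which is harmless downstream).
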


\begin{proof}
Consider a cluster $X \in \cA$.
Let $u \in X$ be the closest to a node $v \in p$ in graph $\Psi$.
From the definition of $\cA$,
$\mydist_{\Psi}(u, v) \leq 2 \gamma^i$.
Let $q$ be a shortest path in $\Psi$ connecting $u$ to $v$.

If $q$ uses a cluster outside $A$,
then that cluster must be either a cluster in $\cN$
or a non-integral cluster of $\partn_{i-1}$.
Therefore, $q$ has to cross a cluster in $\cB$.
Let $\ell \in V(\cB) \cap U$.
Since $\alpha \gamma^{i-1} \leq \gamma^i$, 
$\mydist_{\Psi'}(u, \ell) \leq
2 \gamma^i + \alpha \gamma^{i-1} \leq 3 \gamma^i$.

Consider now the case where $q$ uses only clusters in $A$.
Let $p'$ be the subpath of $p$ which consists
of the nodes within distance $\gamma^{i}$ from $u$,
with respect to $\Psi$.

Suppose that $p'$ uses only clusters in $\cA$.
For the sake of contradiction,
assume that none of the nodes in $p'$ is a leader in $L$.
Let $Y \in \cA$ be the cluster that contains $v$.
We have that the closest leader to $u$ (if it exists),
must be at distance greater than $\gamma^i$
from $v$.
Since the diameter of $Y$ is at most 
$\alpha \gamma^{i-1} \leq \gamma^{i}$,
then $L$ is not maximal because 
$v$ is a valid possible leader.
Therefore, $p'$ must contain a leader $\ell \in L$.
Thus, 
$\mydist_{\Psi'}(u, \ell) \leq
2 \gamma^i + \gamma^{i} \leq 3 \gamma^i$.

If $p'$ doesn't use a cluster in $\cA$,
then it has to use a cluster in $\cB$.
By selecting a node $\ell \in V(\cB) \cap U$,
we get 
$\mydist_{\Psi'}(u, \ell) \leq
2 \gamma^i + \alpha \gamma^{i-1} \leq 3 \gamma^i$.
\end{proof}

\begin{lemma}
\label{thm:minor-diameter}
Every cluster of $\wI_p$ has diameter at most 
$\alpha' \gamma^i$, where $\alpha' =  c_1 k \log^3 n$ for some positive constant $c_1$.
\end{lemma}

\begin{proof}
From Lemma \ref{thm:minor-distance},
each cluster in $\cA$ is within distance $3 \gamma^i$
from a portal node in $L \cup U$.
Since each cluster in $\cA$ has diameter at most $\alpha \gamma^{i-1} \leq \gamma^i$,
each node in $V(\cA)$ 
is within distance at most $4 \gamma^i$
from a portal node.
Algorithm \ref{alg:merger} merges the clusters in $\cA$ to produces new clusters in $\cI_p$ and $\cK_p$ 
where each cluster is ``centered'' at a node in $L \cup U$.
From Theorem \ref{thm:max-detour}, the detour of each node in its new cluster  
is at most $O((\alpha \gamma^{i-1}) \log^2 |\cA|) = O(\gamma^{i} \log^2 n)$.
Therefore, the distance of each node to the center node is at most $4 \gamma^i + O(\gamma^{i} \log^2 n)$.
Thus, the diameter of the new clusters is at most twist the distance of its nodes to the center,
namely, at most $\zeta = c \gamma^i \log^2 n$, for some appropriately chosen constant $c$.

The clusters in $\cI(p)$ may increase in diameter,
when they merge with $\cK_q$ clusters from any path $q$
processed after $p$.
Path  $q$ may belong to $S(t)$ or to $S(w)$, 
where $w$ is a descendant in the sub-tree $T' \subseteq T$ rooted at $t$.
Each path $q \in S(t)$ with order after $p$, 
increases the diameter of $\cI_p$ by at most $2 \zeta$,
since newly merged clusters from $\cK_q$
add at most one layer of clusters into $\cI_p$,
and any two clusters in the layer can reach each other
through the previous instance of $\cI_p$.
Thus, when we process the last path in $S(t)$,
we have added at most $k$ layers,
and the increase in the diameter of the new $\cI_p$ 
will be at most $2 \zeta k$.

Similarly, any node in the sub-tree $T'$,
contributes at most $k$ new layers to $\cI_p$.
However, all the nodes of $T'$ in the same level
contribute in total $k$ layers,
since clusters in them
are formed independent of each other.
Since the sub-tree $T$ has at most $1 + \log n$ levels (including $t$),
we have in total $k (1 + \log n)$ additional layers in $\cI_p$,
contributing increase at most $2 \zeta k (1 + \log n)$ to the diameter
of $\cI(p)$.
Therefore,
the diameter of $\wI(p)$ is at most 
$2 \zeta k (1 + \log n) + \zeta \leq c_1 k \gamma^i \log^3 n$,
for some constant $c_1$.     
\end{proof}

The coarsen clusters $\cI_q$ of a path $q \in Q(p)$
may change after processing $p$,
since some clusters in the connected component $\Psi$ of $p$
may merge with the existing clusters of $q$. 
Let $\cI'_q$ and  $\cI''_q$ be the respective clusters of $q$ 
just before and after processing path $p$.
Let $\cZ'(p) = \bigcup_{q \in Q(p)} \cI'_q$ and $\cZ''(p) = \bigcup_{q \in Q(p)} \cI''_q$
be the sets which consist of all the new coarsen clusters of the paths in $Q(p)$
before and after we process $p$, respectively.
Let $\cZ(p) = \cI_p \cup \cZ''(p)$ be the set of coarsen clusters that have been
formed so far by the paths in $p \cup Q(p)$.
We observe that
for any path $q \in Q(p)$ it holds that $V(Z(q)) \subseteq V(Z'(p)) \subseteq V(Z(p))$,
since the previous clusters of $p$ may only grow before $p$ is processed. 

For any set of nodes $Y$ denote with
$\Gamma(Y)$ the set of clusters in $\partn_{i-1}$
which are within distance $2 \gamma^i$ from $p$,
namely,
$\Gamma(Y) = \{X \in \partn_{i-1} : \mydist_G(X, Y) \leq 2 \gamma^i\}$.
Note that $\Gamma(p)$ contains both the integral and non-integral clusters of $\Psi$ at distance 
upto $2 \gamma^i$ from $p$.
In the next result we show that each cluster in $\Gamma(p)$ must be included 
in some coarsen cluster of $p \cup Q(p)$.
This result also implies that each node in path $p$ will be a member of some cluster 
which either belongs to $\wI_p$ or to $\wI_q$ of a path $q \in Q(p)$.

\begin{lemma}
\label{thm:minor-biggamma}
$\Gamma(p) \subseteq \cZ(p)$.
\end{lemma}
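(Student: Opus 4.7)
The plan is to fix a cluster $X \in \Gamma(p)$ and track its status at the moment Algorithm~\ref{alg:minor-separator} begins processing path $p$, splitting into two cases according to whether $X$ has already been absorbed into some cluster of $\cN$. The overall argument is by induction on the processing order of paths in the decomposition tree $T$, so the lemma is assumed for every $q \in Q(p)$.

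In the first case, $X$ is already contained in some cluster of $\cN$ just before $p$ is processed. Since at that moment $\cN = \bigcup_{q \in Q(p)} \cI'_q$, there exist $q \in Q(p)$ and $Y \in \cI'_q$ with $X \subseteq Y$. Because later merges in Algorithm~\ref{alg:minor-separator} only enlarge clusters of $\cN$, we have $Y \subseteq Y'$ for some $Y' \in \cI''_q$, giving $X \subseteq V(\cZ''(p)) \subseteq V(\cZ(p))$.

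In the second case, $X$ is not yet in $\cN$, and I claim $X \in \cA$. Given this, the cluster-aggregation invocation on $\Psi'$ with portals $L \cup U$, followed by the $\cK_p \setminus \cI_p$ merge step of Algorithm~\ref{alg:minor-separator}, places $X$ inside some cluster of either $\cI_p$ (directly in $\cZ(p)$) or of $\cI''_q$ for some $q \in Q(p)$ (in $\cZ''(p) \subseteq \cZ(p)$). To verify $X \in \cA$, I need (i)~$X \subseteq V(\Psi)$, (ii)~$\mydist_\Psi(X,p) \leq 2\gamma^i$, and (iii)~$X \cap V(\cN) = \emptyset$; condition (iii) is the case assumption.

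For (i) and (ii) I would take a shortest $G$-path $\pi$ from $X$ to $p$ of length at most $2\gamma^i$. If $\pi$ stays inside $\Psi$, both conditions follow at once. Otherwise $\pi$ contains a vertex $u \notin V(\Psi)$, which must lie on some separator path $q$ processed strictly before $p$: either $q \in S_j$ with $j < \chi$ (so $q$ was removed when forming $\Psi$), or $q \in S_\chi$ with smaller processing order than $p$; in either case $q \in Q(p)$. Since $u$ lies on $\pi$ and $u \in q$, we have $\mydist_G(X,q) \leq 2\gamma^i$, so $X \in \Gamma(q)$. The inductive hypothesis applied to $q$ then places $X$ inside $V(\cZ(q))$ by the time $q$ finishes processing, hence inside some cluster of $\cN$ before $p$ begins, contradicting the case assumption. \textbf{The inductive step in this last paragraph is the crux of the proof}: it is where the processing order of path separators interacts with the shortest-path structure of $k$-path separators to guarantee that an unabsorbed cluster always admits a short route to $p$ that stays inside $\Psi$.
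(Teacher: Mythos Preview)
Your inductive framework and case split mirror the paper's proof, and the ``shortest-path exits $\Psi$, hence meets a previously processed separator'' mechanism is exactly the right engine. But there is a real gap in your verification of condition~(i), $X \subseteq V(\Psi)$.

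You write: ``If $\pi$ stays inside $\Psi$, both conditions follow at once.'' That is false for~(i). The path $\pi$ staying in $\Psi$ only tells you that its endpoint in $X$ lies in $\Psi$; it says nothing about the rest of $X$. The missing situation is when $X$ is a \emph{non-integral} cluster of $\partn_{i-1}$---one that straddles the boundary of $\Psi$ but whose nearest vertex to $p$ happens to sit inside $\Psi$ with a short in-$\Psi$ route to $p$. The paper treats this as a separate preliminary fact: if $X$ intersects $\Psi$ but is not contained in it, then (since $X$ is connected) some vertex of $X$ lies on a separator path $q$ whose removal created $\Psi$; hence $\mydist_G(X,q)=0$, so $X\in\Gamma(q)$, and the inductive hypothesis for $q$ forces $X$ into $\cN$ before $p$ is processed---contradicting your Case~2 assumption. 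The repair uses precisely your inductive mechanism, just applied to a vertex of $X$ itself rather than to a vertex along $\pi$. Without it, your Case~2 does not establish $X\in\cA$.

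One smaller point: in your ``otherwise'' branch you list ``$q\in S_\chi$ with smaller processing order than $p$'' as a possible location for the exit vertex $u\notin V(\Psi)$. Paths in $S_\chi$ lie inside $\Psi$, so they cannot witness an exit; only paths in $S_j$ with $j<\chi$ or ancestor separators bound $\Psi$. This does not break the argument (all such $q$ are still in $Q(p)$), but the parenthetical as written is inaccurate.
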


\begin{proof}
We prove the claim by induction on $|Q(p)|$.
For the basis case, $|Q(p)| = 0$,
path $p$ is the first to be processed by the algorithm
with $Q(p) = \emptyset$.
Therefore, $\Gamma(p) = \cI_p = \cZ(p)$.
Assume now that the claim holds for $|Q(p)| \leq \sigma$,
and consider the case $|Q(p)| = \sigma + 1$.
From induction hypothesis,
for each path $q \in Q(p)$,
$\Gamma(q) \subseteq \cZ(q)$.

Let $\cN$ be the set of newly formed coarsen clusters of the algorithm 
just before we process $p$.
First, we show
that just before we process path $p$
the clusters of $\cN$ that intersect $\Psi$
can only be those in $\cZ'(p) \subseteq \cN$. 
Suppose, for the sake of contradiction,
that there is a cluster $X \in \cN \setminus \cZ'(p)$
which intersects $\Psi$.
Cluster $X$ must be a non-integral cluster in $\Psi$, 
namely, contains a node $y \notin V(\Psi)$,
since any integral cluster in $\Psi$ can only have been 
built by a path in $Q(p) \cap S_\chi$,
where $p \in S_\chi$.  
Take a node $u \in X  \cap V(\Psi)$.
Any path in $X$ from $u$ to $y$ must cross some path $q \in Q(p)$
whose removal from $G$ contributed to the formation of $\Psi$. 
Since $q \in V(\Gamma(q))$,
and from induction hypothesis $\Gamma(q) \subseteq \cZ(q)$,
we have that $q \in V(\cZ(q)) \subseteq V(\cZ'(p))$.
Thus, $X$ has to be a cluster of $\cZ'(p)$, a contradiction.

Next, we show that any non-integral 
cluster $Y \in \partn_{i-1}$, $Y \notin \partn^\Psi_{i-1}$,
which intersects $\Psi$ is used in a cluster of $\cZ'(p)$.
Note that $Y$ must have been crossed
by at least a path $q \in Q(p)$ whose removal 
from $G$ contributed to the creation of $\Psi$.
Since the diameter of $Y$ is bounded by 
$\alpha \gamma^{i-1} \leq \gamma^{i}$,
we have that $Y \in \Gamma(q) \subseteq \cZ(q)$.
Therefore, $Y \in \cZ'(p)$, since $V(\cZ(q)) \subseteq V(\cZ'(p))$.

We continue now with the main claim.
Consider a cluster $X \in \Gamma(p)$.
There are the following possibilities:
\begin{itemize}
\item
$X \in \partn^{\Psi}_{i-1}$:
$X$ is integral in $\Psi$ and we examine the following sub-cases.

\begin{itemize}
\item
$X \in \cN$:
since before processing $p$ 
only clusters in $\cZ'(p)$ intersect $\Psi$,
we get $X \in \cZ'(p)$.
Therefore, according to the algorithm,
after processing $p$ cluster $X$ will remain in the same cluster as in $\cZ'(p)$.
Thus, $X \in \cZ(p)$.

\item
$X \in \cA$:
from the algorithm, after processing $p$ there are two possibilities.
First possibility is $X \in \cI_p$, and hence, $X \in \cZ(p)$. 
Second possibility is $X \in \cK_p \setminus \cI_p$
and $X$ is either (i) adjacent to some node in $\cN$,
or (ii) adjacent to some non-integral cluster in $\Psi$.
In case (i) $X$ merges with a cluster in $\cN$,
and since only clusters of $\cZ'(p) \subseteq \cN$ can be in $\Psi$,
we immediately have $X \in \cZ(p)$.
In case (ii),
as we have shown above any non-integral cluster
of $\partn_{i-1}$ in $\Psi$
is a member of $\cZ'(p) \subseteq \cN$,
and thus $X$ merges with a cluster of $\cZ'(p)$,
which implies that $X \in \cZ(p)$.
\end{itemize}

\item
$X \notin \partn^{\Psi}_{i-1}$:
$X$ is either non-integral in $\Psi$ or does not intersect $\Psi$ at all.
Then, $X$ must contain a node $u \notin \Psi$.
If $X$ intersects $\Psi$,
then we have shown above that $X \in \cZ'(p)$,
and thus, $X \in \cZ(p)$.  
If $X$ does not intersect $\Psi$,
any path from $p$ to $X$ must intersect a path $q \in Q(p)$,
since otherwise $X$ wouldn't reside in a different component than $\Psi$.
Since $\mydist_G(p, X) \leq 2 \gamma^i$,
we have that $\mydist_G(q, X) \leq 2 \gamma^i$.
Therefore, $X \in \Gamma(q) \subseteq \cZ(q)$.
Consequently, $X \in \cZ(p)$.
\end{itemize} 
\end{proof} 

\begin{lemma}
\label{thm:minor-Iintersection}
Any ball of radius $\gamma^i$ in $G$
intersects with at most $2\alpha'+3$ 
clusters of $\wI_p$.
\end{lemma}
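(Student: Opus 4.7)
My plan is to set up a one-to-one correspondence between clusters in $\wI_p$ and leaders in $L$, then use the $\gamma^i$-separation of the leaders along the shortest path $p$ to bound how many of them can be dragged into a ball of radius $\gamma^i$.

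First I would verify that each cluster $C \in \wI_p$ contains exactly one leader of $L$, which I will call $\ell(C)$. The clusters in $\cI_p$ are precisely the output clusters of Algorithm~\ref{alg:merger} that contain a portal from $L$, and by the definition of $L$ two distinct leaders can never be co-resident in a single cluster of $\cA$; cluster aggregation, which only glues pieces of $\cA$ around each portal, therefore produces distinct clusters for distinct leaders. The subsequent per-path updates only augment a cluster of $\cI_p$ by absorbing a neighbouring cluster of $\cK_q \setminus \cI_q$ from some later path $q$, and such mergers never introduce a second $L$-node for $p$. Hence $C \mapsto \ell(C)$ is a well-defined injection from $\wI_p$ into $L$.

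Next I would translate ``$B(v,\gamma^i)$ meets $C$'' into ``$v$ is close to $\ell(C)$''. If $u \in C \cap B(v,\gamma^i)$, then Lemma~\ref{thm:minor-diameter} gives $\Diam{\subG{G}{C}} \le \alpha'\gamma^i$, so there is a path inside $C$ (and hence in $G$) of weight at most $\alpha'\gamma^i$ from $u$ to $\ell(C)$. The triangle inequality therefore yields $\dist{v}{\ell(C)} \le (\alpha'+1)\gamma^i$.

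Finally, I would combine this with the $p$-spacing of $L$. Because $p$ is a shortest path in the residual subgraph $\Psi$ in which it lives, for any two nodes on $p$ the distance along $p$ equals their distance in $\Psi$, and the prefix/suffix of $p$ between two of its own nodes is itself a shortest path. So for two intersected clusters with leaders $\ell_1,\ell_2$, one obtains $\mydist_p(\ell_1, \ell_2) \le 2(\alpha'+1)\gamma^i$ once the short $G$-paths from $v$ to each $\ell_j$ are routed through the cluster interiors (which lie inside $\Psi$ where $p$ is shortest). Combined with $\mydist_p(\ell_1,\ell_2) \ge \gamma^i$ for distinct leaders, at most $\lfloor 2(\alpha'+1) \rfloor + 1 = 2\alpha' + 3$ leaders can fit into such a $p$-window, and hence at most $2\alpha'+3$ clusters of $\wI_p$ are hit.

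The main obstacle is the last step: lifting the $G$-distance bound on pairs of leaders to a comparable bound on their $p$-distance. The subtlety is that $G$ might offer shortcuts through separator paths removed at earlier levels of the recursion, so in general only $\dist_G \le \dist_\Psi = \mydist_p$ holds, not the reverse. To close this gap I expect to argue that the short $G$-path from $v$ to $\ell(C)$ is forced to enter $\Psi$ (indeed, to enter $\Psi'$) at a point that is itself within $O(\alpha'\gamma^i)$ of $\ell(C)$ along $p$, using Lemma~\ref{thm:minor-distance} and the fact that $C \in \wI_p$ is anchored to $\ell(C)$ via a path that stays inside the cluster. Once this localization is in place, the leader-counting argument above goes through verbatim.
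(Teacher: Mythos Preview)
Your overall strategy matches the paper's: associate each cluster of $\wI_p$ with its unique leader, bound the distance from the ball centre $v$ to each such leader by $(\alpha'+1)\gamma^i$ via Lemma~\ref{thm:minor-diameter}, and then use the $\gamma^i$-spacing of leaders along the shortest path $p$ to count. The correspondence $C\mapsto\ell(C)$ and the triangle-inequality step are exactly as in the paper.

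The gap you yourself flagged, however, is not closed by your proposed fix, and the paper resolves it by a different (and cleaner) argument. You attempt to salvage the inequality $\mydist_p(\ell_1,\ell_2)\le 2(\alpha'+1)\gamma^i$ by arguing that the $G$-path from $v$ to $\ell(C)$ ``is forced to enter $\Psi$'' at a controlled point, appealing to Lemma~\ref{thm:minor-distance}. That lemma only controls distances inside $\Psi'$ to portals in $L\cup U$; it says nothing about where an arbitrary $G$-geodesic from a possibly external $v$ might pierce $\Psi$, so the localization you sketch does not follow from it. The actual mechanism is the opposite of what you envision: one first proves that the ball $B(v,\gamma^i)$ is \emph{entirely} contained in $\Psi$. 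This is done via Lemma~\ref{thm:minor-biggamma}: any integral $\partn_{i-1}$-cluster of $\Psi$ within $G$-distance $2\gamma^i$ of $G\setminus\Psi$ is within $2\gamma^i$ of some path in $Q(p)$, hence lies in $\cZ'(p)$ and is therefore disjoint from $\wI_p$. Thus every cluster of $\wI_p$ sits at $G$-distance at least $2\gamma^i$ from $G\setminus\Psi$, forcing any radius-$\gamma^i$ ball that touches $\wI_p$ to be a subgraph of $\Psi$.

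Once that reduction is in place there is no obstacle at all: with $v\in\Psi$ and the cluster interiors in $\Psi$, the bound $\dist{v}{\ell_j}\le(\alpha'+1)\gamma^i$ is already a $\Psi$-distance bound, and since $p$ is a shortest path in $\Psi$ (subpaths of shortest paths are shortest) one gets $\mydist_p(\ell_1,\ell_2)=\mydist_\Psi(\ell_1,\ell_2)\le 2(\alpha'+1)\gamma^i$ directly. Your counting then finishes the proof. So the missing idea is not a ``localization'' of the entry point into $\Psi$, but rather the global containment $B\subseteq\Psi$ obtained from Lemma~\ref{thm:minor-biggamma}.
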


\begin{proof}
We start by showing that we only need to consider 
balls of radius $\gamma^i$ in $\Psi$.
Let $G' = G \setminus \Psi$.
Let $Y$ denote the set of nodes in $G'$ such that
each $x \in Y$ is adjacent to a node in $\Psi$.
It must be that each $x$ is a member of a path in $Q(p)$,
since $x$ is removed from the network before path $p$ is processed.
In other words, $Y \subseteq V(Q(p))$.

Let $\cF$ be all the (integral) clusters in $\partn_{i-1}^\Psi$
which are at distance at most $2 \gamma^i$ from $Y$,
namely, 
$\cF = \{X \in \partn_{i-1}^\Psi : \mydist_\Psi(X, Y) \leq 2 \gamma^i\}$.
Clearly, 
$\cF = \partn_{i-1}^\Psi \cap \Gamma(Y)$.

If we apply Lemma \ref{thm:minor-biggamma} to any path $q \in Q(p)$,
we obtain that $\Gamma(q) \subseteq \cZ(q)$.
Since $\Gamma(Y) \subseteq \bigcup_{q \in Q(p)} \Gamma(q)$,
we obtain  $\Gamma(Y) \subseteq \bigcup_{q \in Q(p)} \cZ(q)$.
Since $ V(\cZ(q)) \subseteq V(\cZ'(p))$, we obtain $\Gamma(Y) \subseteq \cZ'(p)$.
Therefore, the clusters in $\cF$ 
are all used in coarsen clusters of paths in $Q(p)$ just before processing $p$,
that is, $\cF \subseteq \cZ'(p)$.
According to the algorithm, the coarsen clusters of $p$, $\cI_p$, 
cannot possibly contain any of the coarsen clusters in $\cZ'(p)$,
namely, $\cI_p \cap \cZ'(p) = \emptyset$.
Consequently, $\cI_p \cap \cF = \emptyset$.
When the algorithm further processes the remaining paths in $\Psi$ of path set $S(t)$ 
(paths ordered after $p$ in $S(t)$),
and then the descendants of $t$ in $T$,
we have that each of the coarsen clusters in 
$I_p$ may grow (including new clusters of $\partn_{i-1}$), 
however, they will never intersect $\cF$.
Thus, $\wI_p \cap \cF = \emptyset$.

Consequently,
any cluster of $\wI_p$ is at distance at least $2 \gamma^i$
from $G'$.
Therefore, any ball of radius $\gamma^i$
that intersects clusters of $\wI_p$ has to be
a sub-graph of $\Psi$.
Thus, in order to prove the main claim,
we only need to focus on graph $\Psi$.

Consider now a ball $B = B(u,\gamma^i)$ within $\Psi$.
Suppose that $\xi \geq 2$ clusters of $\wI_p$
intersect $p$. 
Path $p$ is a shortest path in $\Psi$.
Each cluster in $\wI_p$ 
has a distinct leader in $p$.
The leaders are at distance at least $\gamma^i$ apart in $p$.
Therefore, there are two clusters intersecting $B$,
whose respective leaders, $\ell_1$ and $\ell_2$, 
are at distance at least $\mydist_\Psi(\ell_1, \ell_2) \geq (\xi - 1) \gamma^i$.
Ball $B$ provides an alternative
path between $\ell_1$ and $\ell_2$ through $u$,
with total length is bounded by 
$\mydist_\Psi(\ell_1, \ell_2) \leq \mydist_\Psi(\ell_1, u) + \mydist_\Psi(u, \ell_2)$.
Since the cluster of $\ell_1$ intersects $B$,
we obtain from Lemma \ref{thm:minor-diameter} that
$\mydist_\Psi(\ell_1, u) \leq \alpha' \gamma^i + \gamma^i = (\alpha' + 1) \gamma^i$.
Similarly, 
$\mydist_\Psi(u, \ell_2) \leq (\alpha' + 1) \gamma^i$.
Therefore,
$\mydist_\Psi(\ell_1, \ell_2) \leq 2(\alpha' + 1) \gamma^i$.
Therefore, it has to be $\xi - 1 \leq 2 (\alpha' + 1)$,
or equivalently,
$\xi \leq 2 \alpha' + 3$. 
\end{proof}

\begin{lemma}
\label{thm:minor-intersection}
Any ball of radius $\gamma^i$ in $G$
intersects with at most $c_2 \alpha' k \log n$
clusters of $\partn_{i}$,
for a constant $c_2$.
\end{lemma}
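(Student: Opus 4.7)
The plan is to bound, for each ball $B = B(u,\gamma^i)$, the number of paths $p$ for which some cluster of $\wI_p$ intersects $B$; call such paths \emph{relevant}. Since Lemma \ref{thm:minor-partn_i} (implicit in the construction) says every cluster of $\partn_i$ belongs to exactly one path, and Lemma \ref{thm:minor-Iintersection} bounds by $2\alpha' + 3$ the number of clusters of $\wI_p$ that meet $B$, it suffices to show that the number of relevant paths is $O(k \log n)$.

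The key structural observation comes from the proof of Lemma \ref{thm:minor-Iintersection}: if $p$ is relevant then $B \subseteq \Psi$, where $\Psi$ is the component in which $p$ was processed. In particular $B \subseteq G(t)$ for the decomposition-tree node $t \in T$ that owns $p$. Because the components $G(t)$ are strictly nested along ancestor chains of $T$, the set $\{t \in T : B \subseteq G(t)\}$ is a single root-to-descendant chain $t_0 = \pi, t_1, \ldots, t_h$, and since each child in $T$ has at most half the nodes of its parent, $h \leq 1 + \log n$. So to finish I only need to bound, uniformly in $j$, the number of relevant paths in $S(t_j)$.

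For a fixed $t_j$, the paths of $S(t_j)$ are processed in sub-levels $S_1, \ldots, S_l$ with $\sum_\chi k_\chi \leq k$. At sub-level $\chi$, the residual graph $G(t_j) \setminus \bigcup_{r<\chi} S_r$ splits into disjoint components, each a candidate $\Psi$. For a path $p \in S_\chi$ to be relevant we need $B \subseteq \Psi$; but if $B$ were split across multiple components of the residual graph, or if any node of $B$ lay on an earlier separator $S_r$, then $B$ could not be contained in any single $\Psi$ at sub-level $\chi$. Thus at sub-level $\chi$ there is at most one candidate $\Psi$ that contains $B$, contributing at most $k_\chi$ relevant paths, and summing gives at most $k$ relevant paths per $t_j$. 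Multiplying, the total number of relevant paths is at most $k(1 + \log n)$, so the number of clusters of $\partn_i$ meeting $B$ is at most $(2\alpha' + 3)\,k\,(1 + \log n) \leq c_2 \alpha' k \log n$ for a suitable constant $c_2$.

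The main delicate point is keeping the two-tier path processing straight: the outer tree levels of $T$ (which halve the component size, yielding an $O(\log n)$ factor) and the inner separator sub-levels within each $G(t_j)$ (each contributing at most $k_\chi$ paths, summing to $k$). The containment $B \subseteq \Psi$ from Lemma \ref{thm:minor-Iintersection} is exactly what rules out double-counting across branches of the decomposition tree and justifies the crucial ``at most one $\Psi$ per sub-level'' step.
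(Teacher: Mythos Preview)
Your argument is correct and reaches exactly the same bound $(2\alpha'+3)\,k(1+\log n)$, but it gets there by a different route than the paper.  The paper fixes the center $v$ of the ball, locates the unique separator path $p$ with $v\in p$, and invokes Lemma~\ref{thm:minor-biggamma} to obtain $B\subseteq\Gamma(p)\subseteq\cZ(p)$; this immediately confines every $\partn_i$-cluster meeting $B$ to the set $Q'=\{p\}\cup Q(p)$, whose size is at most $k(1+\log n)$ because $Q(p)$ consists only of paths in ancestors of $w$ (or earlier in $S(w)$).  You instead start from the relevant paths themselves, use the ball-containment fact $B\subseteq\Psi$ established inside the proof of Lemma~\ref{thm:minor-Iintersection}, and deduce that all relevant tree nodes form a single root-to-descendant chain; the per-node bound of $k$ paths then gives the same $k(1+\log n)$.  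Your approach is arguably cleaner in that it avoids re-invoking Lemma~\ref{thm:minor-biggamma} explicitly, though of course that lemma is what underpins the $B\subseteq\Psi$ step you are quoting.  The paper's approach has the minor advantage of identifying a concrete witness path $p$ and working forward from it, whereas your argument is a containment/counting argument that never singles out a specific path.
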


\begin{proof}
Consider a node $v \in G$ and the ball $B = B(v, \gamma^i)$.
Each node $v \in G$ belongs to at least one path in a path separator in $T$. 
Let $p \in S(w)$ be 
the first path to be processed by the algorithm with $v \in p$.
Clearly, $B(v,\gamma^i) \subseteq \Gamma(p)$.
Therefore, from Lemma \ref{thm:minor-biggamma},
we have that $B \subseteq \cZ(p)$.
Since $\cZ(p)$ consists only of clusters that belong
to $Q' = p \cup Q(p)$,
all the paths in $Q'$ appear in path separators of $T$ 
between the root and $w$. 
Since the depth of $T$ is at most $1 + \log n$,
the total number of path separators
involved in $Q'$ is at most $1 + \log n$,
each contributing $k$ paths.
Therefore,
$|Q'| \leq k (1 + \log n)$.
   
From Lemma \ref{thm:minor-Iintersection},
$B$ intersects with at most $(2 \alpha' + 3)$
clusters of each path $q \in Q'$.
Thus,
the total number of clusters of $\partn_{i}$ intersecting $B$
is at most $(2 \alpha' + 3) k (1 + \log n) \leq c_2 \alpha' k \log n$,
for a constant $c_2$, as needed.  
\end{proof}   


\begin{LabeledProof}{Lemma~\ref{thm:minor-partn_i}}
Every node in $G$ belongs to a path in some path separator
used by the algorithm.
From Lemma \ref{thm:minor-biggamma},
each node in a path $p$ must be a member of some coarsen  cluster 
which either belongs to $p$ or to a path $q \in Q(p)$.
Thus, for each path $p$, $\cK'_p = \emptyset$.
Consequently,
each node $v \in G$ will eventually appear in some cluster $\wI_q$
of some path $q \in S$.
Therefore, $\partn_{i}$ is a partition of $G$.

From Lemma \ref{thm:minor-diameter},
the diameter of any $\wI_q$ is bounded by $\alpha' \gamma^i$.
Therefore, 
the diameter of each cluster 
in $\partn_{i}$ is at most $\alpha' \gamma^i$.
In addition, from Lemma \ref{thm:minor-intersection},
each ball of radius $\gamma^i$ 
intersects at most $c_2 \alpha' k \log n$
clusters of $\partn_{i}$.
Consequently, $\partn_{i}$ is a $(\alpha', c_2 \alpha' k \log n , \gamma^i)$-partition of $G$.
\end{LabeledProof}


\begin{LabeledProof}{Theorem~\ref{thm:minor-universal}}
From Lemma \ref{thm:minor-partn_i},
since in fixed minor-free graphs $k = O(1)$,
we can build a hierarchy of clusters 
by choosing $\alpha = \alpha' = O(\log^3 n)$.
Further, for each level $i$,
we can create the necessary padding around 
a root node $r \in G$ of radius $\gamma^i$,
by creating a cluster that contains the ball $B(r, \gamma^i)$.
We can do this by using either of two methods.
In the first method, we can explicitly add $r$ to the first separator in $G$
as an artificial path (with one node) that needs to be processed first.
This causes the size of the first separator to be of size $k + 1$,
and in the analysis we replace $k$ with $k+1$.
In the second method, we can merge all the clusters in $B(r, \gamma^i)$
created by the algorithm, giving a new cluster whose diameter 
is no more than three times the diameter of the old cluster.
Either way, the impact to the parameters of the 
clustering is a constant factor,
giving the desired hierarchical partition.
It is easy to verify that all the steps of the algorithm 
can be performed in polynomial time with respect to the size of $G$ 
and the parameters of the problem.
\end{LabeledProof}


\section{Conclusions}
\label{sec:concs}
In this paper, we have presented a polynomial-time $2^{O(\sqrt{\log
    n})}$-stretch \ust\ construction for general graphs, which is the
first known subpolynomial-stretch ($o(n^\epsilon)$ for any $\epsilon >
0$) solution for general graphs.  We have also presented a
$\polylog(n)$-stretch \ust\ algorithm for minor-free graphs, for which
$\Omega(\log n)$ is a known lower bound.` Both \ust\ algorithms are
based on a framework that draws close connections between a certain
class of strong graph partitions and low-stretch \ust s.  Our modular
framework leads us to designing new strong-diameter partitions for
both general and minor-free graphs, and solving a new cluster
aggregation problem, all of which are of independent interest.

Our work leaves several important open problems.  The most compelling
one is that of deriving tight bounds on the best stretch achievable
for general graphs (specifically, is $\polylog(n)$-stretch
achievable?).  For minor-free graphs, the exponent in the
$\polylog(n)$ factor we achieve for stretch is high.  Our current
analysis follows the modular algorithmic framework; we believe that an
improved bound can be achieved by a more careful ``flatter'' analysis.
Furthermore, any improved approximation for the cluster aggregation
problem will yield significant improvements in the \ust\ stretch
factors.



\bibliographystyle{plain}
\bibliography{GraphUST}

\begin{thebibliography}{10}

\bibitem{abraham+bn:stretch}
I.~Abraham, Y.~Bartal, and O.~Neiman.
\newblock Nearly tight low stretch spanning trees.
\newblock In {\em Proceedings of IEEE FOCS}, 2006.

\bibitem{abraham+g:separators}
I.~Abraham and C.~Gavoille.
\newblock Object location using path separators.
\newblock In {\em Proceedings of ACM PODC}, pages 188--197, 2006.

\bibitem{abraham+gmw:decomposition}
I.~Abraham, C.~Gavoille, D.~Malkhi, and U.~Wieder.
\newblock Strong-diameter decompositions of minor free graphs.
\newblock In {\em Proceedings of ACM SPAA}, 2007.

\bibitem{abraham+n:petal}
I.~Abraham and O.~Neiman.
\newblock Using petal-decompositions to build a low stretch spanning tree.
\newblock In {\em Proceedings of ACM STOC}, 2012.

\bibitem{alon+kpw:stretch}
N.~Alon, R.~M. Karp, D.~Peleg, and D.~West.
\newblock A graph–theoretic game and its application to the k-server problem.
\newblock {\em SIAM J. Comput.}, 24(1):78–--100, 1995.

\bibitem{awerbuch+p:sparse}
B.~Awerbuch and D.~Peleg.
\newblock Sparse partitions.
\newblock In {\em Proceedings of IEEE FOCS}, pages 503--513, 1990.

\bibitem{bertsimas+g:euclideantsp}
D.~Bertsimas and M.~Grigni.
\newblock On the space-filling curve heuristic for the euclidean traveling
  salesman problem.
\newblock {\em Operations Research Letters}, 8:241--244, 1989.

\bibitem{bhalgat+ck:differential}
A.~Bhalgat, D.~Chakrabarty, and S.~Khanna.
\newblock Optimal lower bounds for universal and differentially private steiner
  trees and tsps.
\newblock In {\em Proceedings of APPROX}, pages 75--86, 2011.

\bibitem{birk+klsw:veracity}
T.~Birk, I.~Keidar, L.~Liss, A.~Schuster, and R.~Wolff.
\newblock Veracity radius - capturing the locality of distributed computations.
\newblock In {\em Proceedings of ACM SIGACT-SIGOPS PODC}, 2006.

\bibitem{busch+drrs:ust}
C.~Busch, C.~Dutta, J.~Radhakrishnan, R.~Rajaraman, and S.~Srinivasan.
\newblock Split and join: Strong partitions and universal steiner trees for
  graphs.
\newblock In {\em Proceedings of IEEE FOCS}, pages 81--90, 2012.

\bibitem{elkin+est:stretch}
M.~Elkin, Y.~Emek, D.~Spielman, and S.~Teng.
\newblock Lower-stretch spanning trees.
\newblock {\em special issue of SIAM Journal on Computing for STOC'05},
  38(2):608--628, 2008.

\bibitem{fakcharoenphol+rt:treemetric}
J.~Fakcharoenphol, S.~Rao, and K.~Talwar.
\newblock A tight bound on approximating arbitrary metrics by tree metrics.
\newblock In {\em Proceedings of ACM STOC}, pages 448--455, 2003.

\bibitem{goel+e:simultaneous}
A.~Goel and D.~Estrin.
\newblock Simultaneous optimization for concave costs: single sink aggregation
  or single source buy-at-bulk.
\newblock In {\em Proceedings of ACM-SIAM SODA}, pages 499--505, 2003.

\bibitem{goel+p:oblivious}
A.~Goel and I.~Post.
\newblock An oblivious o(1)-approximation for single source buy-at-bulk.
\newblock In {\em Proceedings of IEEE FOCS}, pages 442--450, 2009.

\bibitem{goel+p:constant}
A.~Goel and I.~Post.
\newblock One tree suffices: A simultaneous o(1)-approximation for single-sink
  buy-at-bulk.
\newblock In {\em Proceedings of IEEE FOCS}, pages 593--600, 2010.

\bibitem{gorodezky+kss:tsp}
I.~Gorodezky, R.~D. Kleinberg, D.~B. Shmoys, and G.~Spencer.
\newblock Improved lower bounds for the universal and a priori tsp.
\newblock In {\em Proceedings of APPROX-RANDOM}, pages 178--191, 2010.

\bibitem{gupta+hr:oblivious}
A.~Gupta, M.~T. Hajiaghayi, and H.~{R\"{a}cke}.
\newblock Oblivious network design.
\newblock In {\em Proceedings of ACM-SIAM SODA}, pages 970--979, 2006.

\bibitem{hajiaghayi+kl:tsp}
M.~T. Hajiaghayi, R.~D. Kleinberg, and F.~T. Leighton.
\newblock Improved lower and upper bounds for universal tsp in planar metrics.
\newblock In {\em Proceedings of ACM-SIAM SODA}, pages 649--658, 2006.

\bibitem{jia+lnrs:ust}
L.~Jia, G.~Lin, G.~Noubir, R.~Rajaraman, and R.~Sundaram.
\newblock Universal approximations for tsp, steiner tree, and set cover.
\newblock In {\em Proceedings of ACM STOC}, pages 386--39, 2005.

\bibitem{klein+pr:minors}
P.~Klein, S.~A. Plotkin, and S.~Rao.
\newblock Excluded minors, network decomposition, and multicommodity flow.
\newblock In {\em Proceedings of ACM STOC}, 1993.

\bibitem{krishnamachari+ew:datacentric}
B.~Krishnamachari, D.~Estrin, and S.~Wicker.
\newblock Modelling data-centric routing in wireless sensor networks.
\newblock In {\em Proceedings of IEEE INFOCOM}, 2002.

\bibitem{madden+fhh:tag}
S.~Madden, M.~J. Franklin, J.~M. Hellerstein, and W.~Hong.
\newblock Tag: A tiny aggregation service for ad hoc sensor networks.
\newblock In {\em OSDI}, 2002.

\bibitem{madden+sfc:aggrequery}
S.~Madden, R.~Szewczyk, M.~J. Franklin, and D.~Culler.
\newblock Supporting aggregate queries over ad-hoc wireless sensor networks.
\newblock In {\em Proceedings of IEEE WMCSA}, 2002.

\bibitem{peleg:distributebook}
D.~Peleg.
\newblock {\em Distributed Computing: {A} Locality-Sensitive Approach}.
\newblock SIAM, 2000.

\bibitem{platzman+b:tsp}
L.~K. Platzman and III J.~J.~Bartholdi.
\newblock Spacefilling curves and the planar travelling salesman problem.
\newblock {\em Journal of the ACM}, 36(4):719--737, 1989.

\bibitem{racke:congestion}
H.~{R\"{a}cke}.
\newblock Minimizing congestion in general networks.
\newblock In {\em Proceedings of IEEE FOCS}, page 43–52, 2002.

\bibitem{srinivasagopalan+bi:dd}
S.~Srinivasagopalan, C.~Busch, and S.S. Iyengar.
\newblock An oblivious spanning tree for single-sink buy-at-bulk in low
  doubling-dimension graphs.
\newblock {\em IEEE Transactions on Computers}, 99, 2011.

\bibitem{thorup:planar}
M.~Thorup.
\newblock Compact oracles for reachability and approximate distances in planar
  digraphs.
\newblock {\em Journal of ACM}, 51(6):993--1024, 2004.

\end{thebibliography}

\end{document}